\newcommand{\pushright}[1]{\ifmeasuring@#1\else\omit\hfill$\displaystyle#1$\fi\ignorespaces}
\newtheorem{fact}{Fact}
\newtheorem{definition}{Definition}
\newtheorem{assumption}{Assumption}
\newtheorem{lemma}{Lemma}
\newtheorem{remark}{Remark}
\newtheorem{proposition}{Proposition}
\def\begquo{\begin{quote}}
	\def\endquo{\end{quote}}
\def\begequarr{\begin{eqnarray}}
	\def\endequarr{\end{eqnarray}}
\def\begequarrs{\begin{eqnarray*}}
	\def\endequarrs{\end{eqnarray*}}
\def\begarr{\begin{array}}
	\def\endarr{\end{array}}
\def\begequ{\begin{equation}}
	\def\endequ{\end{equation}}
\def\lab{\label}
\def\begdes{\begin{description}}
	\def\enddes{\end{description}}
\def\begenu{\begin{enumerate}}
	\def\begite{\begin{itemize}}
		\def\endite{\end{itemize}}
	\def\endenu{\end{enumerate}}
\def\lef[{\left[\begin{array}}
	\def\rig]{\end{array}\right]}
\def\qed{\hfill$\Box \Box \Box$}
\def\begcen{\begin{center}}
	\def\endcen{\end{center}}
\def\begrem{\begin{remark}\rm}
	\def\endrem{\end{remark}}
\def\begdef{\begin{definition}}
	\def\enddef{\end{definition}}
\def\begpro{\begin{proposition}}
	\def\endpro{\end{proposition}}
\def\begfac{\begin{fact}}
	\def\endfac{\end{fact}}
\def\begass{\begin{assumption}}
	\def\endass{\end{assumption}}
\def\begsubequ{\begin{subequations}}
	\def\endsubequ{\end{subequations}}
\def\begmat#1{\begin{bmatrix}#1\end{bmatrix}}
\def\begali#1{\begin{align}{#1}\end{align}}
\def\begalis#1{\begin{align*}{#1}\end{align*}}
\def\PP{{\mathsf  P}}
\def\RR{{\mathsf  R}}
\def\HH{{\mathsf  h}}
\def\L2e{{\cal L}_{2e}}
\def\bul{\noindent $\bullet\;\;$}
\def\rea{\mathds{R}}
\def\hal{{1 \over 2}}
\def\min{{\mbox{min}}}
\def\hal{{1 \over 2}}
\begin{document}

\title{A New Partial State-Feedback IDA-PBC for Two-Dimensional Nonlinear Systems: Application to Power Converters with Experimental Results}

\author{
	\vskip 1em
	
	Rafael~Cisneros, 
	Leyan~Fang, 
    Wei~He, 
	and Romeo~Ortega, 

	\thanks{
	R. Cisneros, L. Fang and R. Ortega are with the Department of Electrical and Electronic Engineering, ITAM, Mexico City, 01080 Mexico (email: rcisneros@itam.mx; leyan.fang@itam.mx; romeo.ortega@itam.mx).
	
L. Fang is also with the Center for Control Theory and Guidance Technology, Harbin Institute of Technology, Harbin
150001, China.

W. He is with the School of Automation, Nanjing University of Information Science \& Technology, 210044  Nanjing, China (email: hwei@nuist.edu.cn).
	}
}

\maketitle

\begin{abstract}
 In this paper we propose a variation of the widely popular Interconnection-and-Damping-Assigment Passivity-Based Control (IDA-PBC) based on Poincare's Lemma to design {\em output feedback} globally stabilizing controllers for two dimensional systems. The procedure is constructive and, in comparison with the classical IDA-PBC,  whose application is often stymied by the need to solve the (infamous) matching {\em partial} differential equation (PDE), in this new method the PDE is replaced by an {\em ordinary} differential equation, whose solution is far simpler. The procedure is then applied  for the design of {\em voltage-feedback} controllers for the three most typical DC-to-DC power converter topologies: the Buck, Boost and Buck-Boost. It is assumed that these converters feed an {\em uncertain} load, which is characterized by a static relation between its voltage and current. In the case when the load consists of the parallel connection of a resistive term and a constant power load we propose an {\em adaptive} version of the design, adding an identification scheme for the load parameters. This allows the controller to regulate the converter output when the load varies---that is a typical scenario in these applications.  Extensive numerical simulations and {\em experimental} results validate the approach.
\end{abstract}
\begin{IEEEkeywords}
	Passivity-based control, Interconnection and Damping Assignment PBC, DC-to-DC power converters, Passivity. 
\end{IEEEkeywords}
%
\section{Introduction and Background on IDA-PBC}
\lab{sec1}
%
\IEEEPARstart{S}{tabilization} of physical systems by shaping their energy function is a well established technique whose roots date back to the work of Lagrange and Legendre.  Potential energy shaping for fully actuated mechanical systems was first introduced, more than 40 years ago, in \cite{TAKARI}. In \cite{ORTSPO} it was proved that passivity was the key property underlying the stabilisation mechanism of these designs and the, now widely popular, term of {\em passivity-based control} (PBC) was coined---to which many books have been devoted \cite{ORTetalbook,ORTetalbookpid, HATANAKAbook,BAIbook,BAObook,SERRAbook}.

A variation of PBC that has been very successful in many practical applications of equilibrium stabilization is {\em Interconnection and Damping Assignment} (IDA-PBC), first reported in \cite{ORTetal}, and further ellaborated in \cite{ORTGAR}. The main idea of IDA-PBC is to, via feedback, impose to the closed-loop dynamics a {\em port-Hamiltonian} (pH) form \cite{VANbook}. That is, that the closed-loop takes the form 
$$
\dot{x}= Q(x) \nabla P(x)
$$
where $x(t) \in \rea^n$ is the system state, the mapping $Q: \rea^n \to \rea^{n \times n}$ verifies
$$
Q(x)+Q^\top(x)\leq 0
$$ 
and the scalar function $P: \rea^n \to \rea$ satisfies
\begequ
\lab{minhd}
x_\star  = \arg \min \{P(x)\},
\endequ
where $x_\star \in \rea^n$ is the desired equilibrium to be stabilized. It is well-known that the main stumbling block for the application of IDA-PBC is that, as all constructive controller designs, require the solution of a {\em partial differential equation} (PDE)---called the matching PDE---a task that is often complicated. 

In \cite{ORTGAR} an interesting variation of IDA-PBC was reported. It relies on the application of {\em Poincare's Lemma} \cite[Theorem 10.39]{RUDbook}---a simpler version is given  in Appendix \ref{appa}. Applied to nonlinear systems of the form $\dot x=f(x,u)$, and assuming that the matrix $Q(x)$ above is {\em full rank}, Poincare's Lemma states that a necessary and sufficient condition for the existence of $P(x)$, a solution of the matching PDE,
$$
\nabla P(x)=Q(x)^{-1} f(x,\hat u(x))
$$ 
with $\hat u:\rea^n \to \rea^m$ the state feedback control signal, is that the right-hand side satisfies
$$
\nabla \{Q(x)^{-1} f(x,\hat u(x))\}=\Big(\nabla \{Q(x)^{-1} f(x,\hat u(x))\}\Big)^\top.
$$
It turns out that the symmetry condition above can, in some cases, be translated into an {\em ordinary} differential equation (ODE) for the control signal, whose solution is far simpler than the matching PDE.

This variation of the IDA-PBC method has received a scant attention. It was used in \cite{RODetal} to design a very simple static nonlinear output-feedback controller for the Boost converter. It was explored in \cite{ZHAetal} for the design of general state-feedback controllers and in \cite{LIUORTSU} for the control of chemical processes. The main objective of this paper is to further explore the application of this IDA-PBC methodology.

The main goals of the paper are:

\begenu[{\bf G1}]
\item We provide a {\em constructive} version of IDA-PBC for the case of {\em output feedback} controller designs for second order systems, which are {\em non linear} in the input signal. 
\item The method does not require a solution of a PDE, instead it needs to solve an ODE a task which, as shown in the paper, is sometimes obvious.
\item We apply the method for the design of {\em voltage-feedback} controllers for the three most popular DC-to-DC power converters. The main feature of the resulting design is that we don't need to specify the nature of the converter {\em load}, which is simply described by a static relation between the voltage and the current of the load.
\item If the load takes the practically relevant form of a linear resistor in parallel with a {\em constant power} load, we derive a Lyapunov function for the closed-loop system, that is used to estimate the {\em domain of attraction} of the desired equilibrium. 
\item For the case of the load described above, we design an {\em adaptive} version of the controller that estimates---with {\em finite convergence time} (FCT)-- the linear resistor and the power of the constant power load, allowing the possibility of applying the controller for the often encountered case of uncertain and time-varying loads.
\item The applicability of the proposed controller is validated with comprehensive {\em experimental} evidence on the power converters.   
\endenu

The rest of the paper has the following structure. The main theoretical result, that is the proposed IDA-PBC design method, is presented in Section \ref{sec2}. Section \ref{sec3}  is devoted to the description of the mathematical models of the converters studied in the paper and the derivation of their assignable equilibrium sets.   The second main result of the paper---that is, the application of the controller design procedure to the power converters---is given in Section \ref{sec4}. In Section \ref{sec5} we present the adaptive version of the controller. Section \ref{sec6} presents some simulation results and the experimental evidence is given in Section \ref{sec7}. We wrap-up the paper with some concluding remarks and future research in Section \ref{sec8}. \\

\noindent {\bf Notation} All mappings are assumed {\em smooth}. For a scalar function $D \colon \mathbb{R}^{2} \to \mathbb{R}$, we define  $\nabla D := {\left({\partial D \over \partial x}\right)^{\top}},$ $\nabla_{x_i} D := {\left({\partial D \over \partial x_i}\right)^{\top}},i=1,2,$ and  $\nabla^2 D := {\left({\partial^2 D \over \partial x^2}\right)^{\top}}$. For a function of scalar argument $g:\rea \to \rea$, we define $g'(z):={dg(z) \over dz}$. Given the function $\hat u: \rea \to \rea$ and any mapping $Q: \rea^2 \times \rea \to \rea^{m \times n}$,  we define the mapping {\em composition}
\begequ
\lab{hatm}
\hat Q: \rea^2  \to \rea^{m \times n},\;\hat Q(x) :=  Q(x,\hat u(x_2)).
\endequ
{Given a map $F(x)$, we define for the distinguished {\em constant} element $x_\star  \in \rea^n$}, the {\em constant} matrix $F_\star  :=  F(x_\star)$. 
The arguments of a function are omitted when clear from the context.  
%
\section{Proposed Poincare's Lemma-Based IDA-PBC}
\lab{sec2}
%
The main output feedback IDA-PBC design proposed in the paper is contained in the proposition below. Its construction is based on the {\em third variation} of IDA-PBC proposed in \cite{ORTGAR}, which relies on the application of Poincare's Lemma.  In this paper, specializing to output feedback control of two-dimensional systems, we provide a constructive procedure whose main feature is that, in contrast with classical IDA-PBC, it {\em does not} require the solution of the standard matching PDE, which is replaced by an ODE.

\begpro
\lab{pro1}\em
Consider the two-dimensional nonlinear system
\begsubequ
\lab{sys}
\begali{
	\lab{sysdotx}
	\dot x &= f(x,u),\\
	\lab{syss}
	y& = x_2,
}
\endsubequ
with $x(t) \in \rea^2$, $u(t) \in \rea$,  $y(t) \in \rea$ and a desired assignable equilibrium point $x^\star   \in \rea^2$. Assume there exist mappings
\begalis{
	& \alpha_1:\rea^2 \times \rea \to \rea,\; \alpha_2:\rea^2 \times \rea \to \rea,\; \\
   &  \beta:\rea^2 \times \rea \to \rea,\; \hat u:\rea \to \rea,
}
such that the following conditions hold true.
\item[C1.] $\hat \alpha_1(x) \hat \alpha_2(x) + \hat \beta^2(x) \neq 0$. 

\item[C2.] $\hat \alpha_1(x) \leq 0,\;\hat \alpha_2(x) \leq 0$,

\item[C3.] The scalar functions $D_i:\rea^2 \times \rea \to \rea,i=1,2$
\begsubequ
\lab{d}
\begali{
	\lab{d1}
	D_1(x,u) & :=\alpha_1(x,u) f_1(x,u) + \beta(x,u) f_2(x,u) \\
	\lab{d2}
	D_2(x,u) &:= -\beta(x,u) f_1(x,u) + \alpha_2(x,u) f_2(x,u)
}
\endsubequ
when evaluated at $u=\hat u(x)$, satisfy
\begequ
\lab{poilem}
\nabla_{x_2} \hat D_1 = \nabla_{x_1} \hat D_2.
\endequ
\item[C4.] $\hat f_\star =0$

\item[C5.] The following constant matrix condition holds
\begin{align}
	\lab{codp}
	& \begmat{\left(\nabla_{x_1} \hat D_1\right)_\star  & \left(\nabla_{x_2} \hat D_1\right)_\star \\ \left(\nabla_{x_1} \hat D_2\right)_\star  & \left(\nabla_{x_2} \hat D_2\right)_\star }  >0.
\end{align}
\item[C6.]  The largest invariant set contained in the set
\begequ
\label{asysta}
\{ x \in \rea^{2} \mid \hat \alpha_1(x) \hat f^2_1(x) + \hat \alpha_2(x) \hat f^2_2(x)=0 \}
\endequ
equals  $\{ x_\star \}$.
%
Under these conditions $x_\star $ is a {\em globally asymptotically stable} equilibrium of the closed-loop system $\dot x=f(x,\hat u(x_2))$.
\end{proposition}

\begin{proof}
The gist of the proof is to show that, under the conditions of the proposition, the closed-loop system $\dot x=f(x,\hat u(x))$ takes the pH form
\begequ
\lab{phsys}
\dot x=\hat Q^{-1}(x)\nabla P(x),
\endequ
where $\hat Q:\rea^2 \to \rea^{2 \times 2}$ is a full rank matrix, whose symmetric part is negative semidefinite, and $P:\rea^2 \to \rea$ is a positive definite function (with respect to $x_\star$).  That is, the the proposed controller belongs to the class of IDA-PBC \cite{ORTetal,ORTGAR}.

To establish the proof we proceed as follows. First, define the matrix
\begali{
	\lab{q}
	Q(x,u) & :=\begmat{ \alpha_1(x,u) & \beta(x,u)  \\ - \beta(x,u) & \alpha_2(x,u)}.
}
We have the following equivalences:
\begalis{
	\eqref{d} & \Longleftrightarrow D(x,u)=Q(x,u)f(x,u)\\
	{\bf C1} & \Longleftrightarrow \det\{\hat Q(x)\} \neq  0 \Longleftrightarrow \exists\; \hat Q^{-1}(x)\\
	{\bf C2} & \Longleftrightarrow \hat Q(x)+\hat Q^\top(x) \leq 0\\
	{\bf C3} & \Longleftrightarrow  \nabla \hat D(x)=\Big(\nabla \hat D(x)\Big)^\top\\
& \Longleftrightarrow \exists P:\rea^2 \to \rea\;|\; \hat D(x)=\nabla P(x) \\
	{\bf C4} & \Longleftrightarrow x_\star\mbox{\;is\;a\;closed-loop\;equilibrium}\\
	{\bf C5} & \Longleftrightarrow \nabla \hat D_\star(x)>0  \Longleftrightarrow \Big(\nabla^2 P(x)\Big)_\star>0,
}
where the second equivalence in {\bf C3} follows from Poincare's Lemma given in Appendix \ref{appa}. Now, we have
\begin{align}
&\eqref{d}\;(\mbox{evaluated\;at\;}u=\hat u(x))\; \& \;{\bf C1} \nonumber \\
&\Rightarrow \dot x = \hat Q^{-1}(x)\hat D(x) = \hat Q^{-1}(x)\nabla P(x). \nonumber
\end{align}
where we used {\bf C3} to get the second identity, establishing \eqref{phsys}.

The remaining part of the proof follows the standard Lyapunov-based stability analysis of pH systems used in IDA-PBC. Namely, from {\bf C4}, we have that $x_\star $ is an equilibrium of the closed-loop system. The latter, together with {\bf C1}, ensures that  $\nabla P(x_\star )=0$. This together with {\bf C5} ensures that $P(x)$ is positive definite (with respect to $x_\star$). Evaluating $\dot P$ along the trajectories of the closed-loop system \eqref{phsys} yields
\begalis{
	\dot P = &  \nabla^{\top}P(x)\hat Q^{-1}(x)\nabla P(x) \\
	 = & \hal \nabla^{\top}P(x)[\hat Q^{-1}(x)+\hat Q^{-\top}(x)]\nabla P(x) \\
	= &\frac{1}{2} \hat f^\top(x) [\hat Q(x)+\hat Q^\top(x)]\hat f(x)\leq 0
}
where the inequality is obtained invoking again {\bf C2}. This ensures $P(x)$ is a Lyapunov function, establishing global stability. The proof is completed noting that 
\begalis{
\hat f^\top(x) [\hat Q(x)+\hat Q^\top(x)]\hat f(x)&= \hat \alpha_1(x) \hat f^2_1(x) + \hat \alpha_2(x) \hat f^2_2(x),
}
imposing condition {\bf C6} and invoking LaSalle's Theorem \cite[Theorem 4.4]{KHAbook}.
\end{proof}

\begrem
\lab{rem0}
For the purposes of the stability proof, the condition {\bf C1}, which ensures the matrix $\hat Q$ is invertible, {\em may be removed.} Indeed, for any $\hat Q(x)$, \eqref{sysdotx} implies $\hat Q(x)\dot x = \hat Q(x)f(x,u).$  Furthermore, {\bf C3} and  {\bf C4} ensure that 
	$$
		\hat Q(x)f(x,u)=\nabla P(x).
	$$
The proof is completed evaluating $\dot P$, which yields 
$$
\dot P=\frac{1}{2} \hat f^\top(x) [\hat Q(x)+\hat Q^\top(x)]\hat f(x),
$$
that is the same expression we have above. We decided to keep condition {\bf C1} in the main result to make the connection with IDA-PBC, which imposes the closed-loop pH structure \eqref{phsys}. However, notice that its removal is an important relaxation because you can take {\em only one} of the $\alpha_i(x,u)$ to be nonzero, while setting to zero $\beta(x,u)$ and the other one---considerably simplifying \eqref{poilem}.
\endrem

\begrem
\lab{rem1}
It is important to underscore the fact that \eqref{poilem} is evaluated with the control signals $\hat u(x_2)$---simplifying the solution of this equation. This should be contrasted with the equivalent condition in standard IDA-PBC, where it is the (infamous) matching PDE, whose solution usually stymies the application of IDA-PBC.
\endrem 
%
\section{Three DC-to-DC Power Converters}
\lab{sec3}
%
As an illustration of application of  Proposition \ref{pro1} we will design in the next section stabilizing, {\em voltage-feedback}, controllers for the Buck, Boost and Buck-Boost DC-to-DC power converters. Towards this end, we give in the next subsection their mathematical models. Then, to simplify the analytical expressions we propose to use scaled models in Subsection \ref{subsec32} and, finally, in Subsection \ref{subsec33} we present their assignable equilibrium set. 

\subsection{Dynamic models of the converters}
\lab{subsec31}
The electrical circuits of the three studied converters are depicted in Fig. \ref{conv}.  The variables $i$ and $v$ in the figure are, respectively, the inductor current and capacitor voltage in each converter topology. Under normal operation conditions, these variables take \textit{non-negative} values. Moreover, the positive parameters $L,C, G$ and $E$ are, respectively, the converter inductance, capacitance, load conductance and voltage source. Also, $s\in \{0,1\}$ is the position of the switch that acts as control input.  

\begin{figure}[h]
	\centering
	\includegraphics[width=\linewidth]{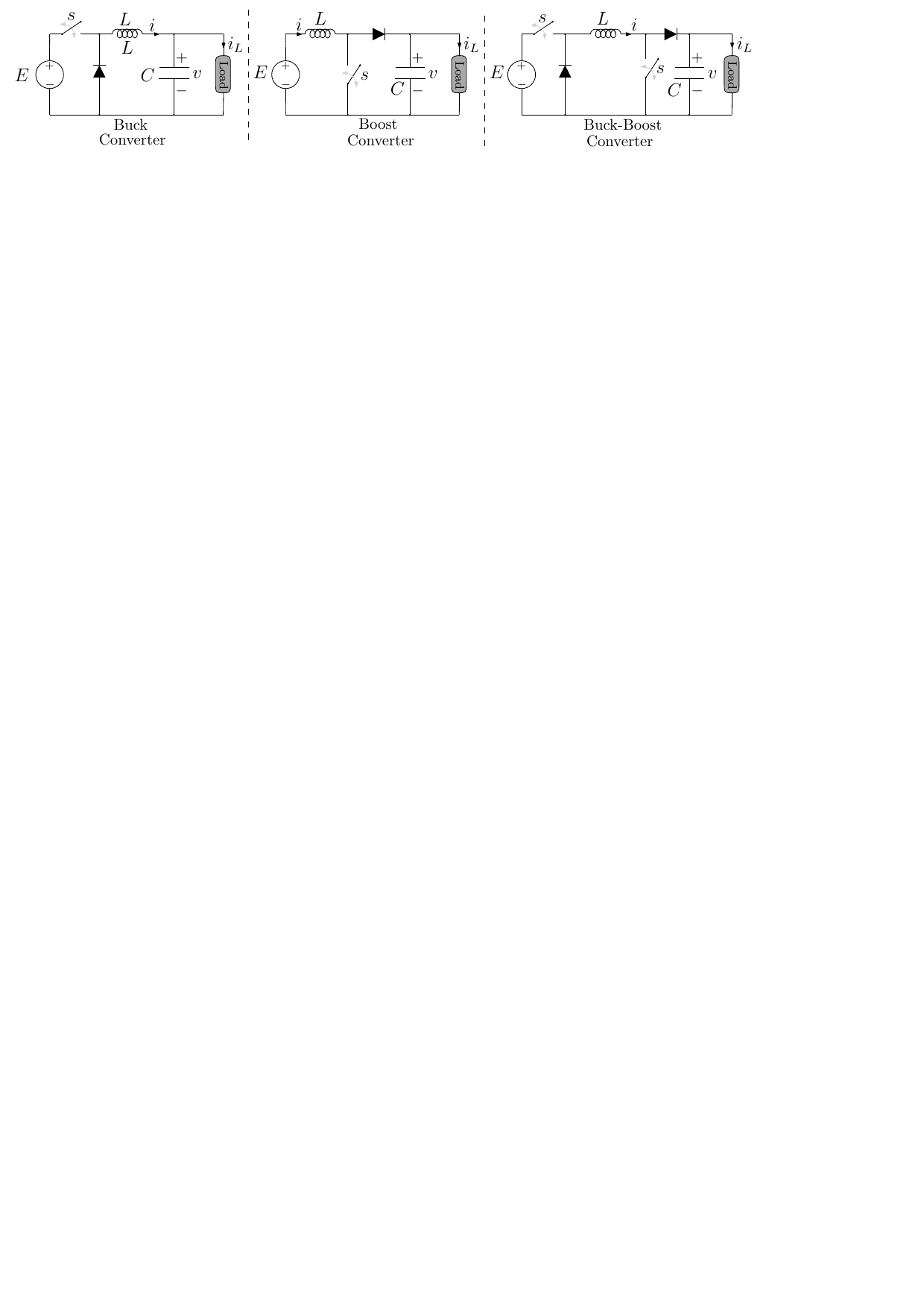}
	\caption{The Buck, Boost and Buck-Boost converters---the signal $s\in\{0,1\}$ opens or closes the converter switch. }
\label{conv}
\end{figure}

A block labelled as ``Load'' is connected at the output port of each converter. One interesting feature of our developments is that we {\em do not need} to give a particular ``structure" to the load, a scenario that is very common in applications where typically the load is {\em uncertain}. We will only assume that there is a {\em static} relation between the {\em current} feeding the load ($i_L$) and the terminal voltage ($v$). That is, there exists a function $\HH:\rea_+ \to \rea_+$ such that
\begequ
\lab{ilhv}
i_L=\HH(v).
\endequ

We introduce, at this point, the so-called \textit{average} models of the three converters, where we defined the {\em continuous} control signal
$$
u:=1-\mbox{avg}\{s\},
$$
which ranges in $u \in(0,1]$---for further details see \cite{ERIMAKbook}. Starting with the Buck converter, its averaged model equations are\footnote{Notice that we are using the symbol $\tau$ to denote the time, which will be modified to $t$ in the sequel when we do a time scaling.}
\begin{equation}\label{conv1}
\begin{aligned}
	L\frac{di}{d\tau} & = -v +uE, ~C\frac{dv}{d\tau} & =  i -\HH(v) .
\end{aligned}
\end{equation}
The equations of the Boost and the Buck-Boost converter can be merged into a single one as follows
\begin{equation}\label{conv2}
\begin{aligned}
	L\frac{di}{d\tau} &= -vu + E+ {\tt g}(u),~ C\frac{dv}{d\tau} &=   iu - \HH(v),
\end{aligned}
\end{equation}
where, the map ${\tt g}:\mathbb{R}\to\mathbb{R}$ is defined as follows
$$
{\tt g}(u)=
\begin{cases}
0\; &\text{for the Boost converter,}\\
-Eu \;&\text{for the Buck-Boost converter}.
\end{cases}
$$
\subsection{A particular representation of the load}
\lab{subsec32}
As will be shown in the sequel, applying Proposition \ref{pro1}, we will derive the control law for an arbitrary load characterized by \eqref{ilhv}. On the other hand, to derive an associated Lyapunov function and to propose an adaptive version of the controller, we need to assume a particular structure for $\HH(v)$. The interest of deriving a Lyapunov function is to be able to determine an estimate of the {\em domain of attraction} of the equilibrium. Furthermore, for this particular load, we delop an {\em adaptive} version of the controller, which allows us to deal with {\em time-varying loads}. 

For the purposes of the derivation of the Lyapunov function and the design of an adaptive controller, we assume the load consists of a resistor in parallel with a constant power load (CPL)---see Fig. \ref{ld}---which is a very general and common scenario.  Applying Kirchhoff's current law we obtain
\begin{align}\label{ym}
i_L= Gv+ \frac{P_{\tt{cpl}}}{v},
\end{align}
where the positive constants $G$ and $P_{\tt{cpl}}$ are, respectively, the parallel resistor admittance the power level of the CPL. 

\begin{figure}[h!]
\centering
\includegraphics[width=0.15\linewidth]{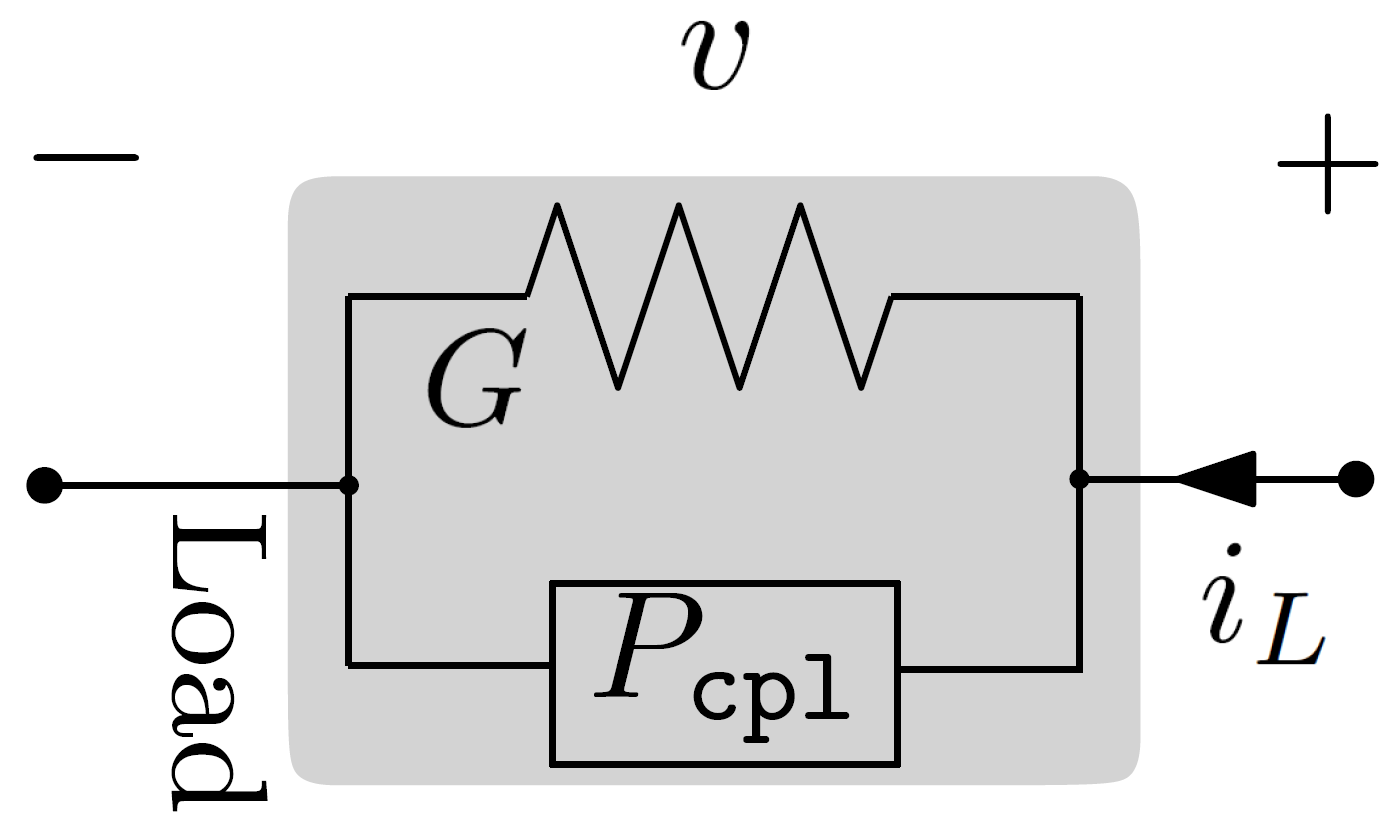}
\caption{The converter load for the Lyapunov function derivation}
\label{ld}
\end{figure}
\subsection{Normalized models of the converters}
\lab{subsec33}
It is well-known that the average models can be normalized by means of both time and variable scaling \cite{SIRSILbook}. The interest of using normalized models is, of course, to simplify the analytic expressions of our controllers, which are presented in the {\bf Fact} below. The proof of the transformation is straightforward, hence it is omitted---the interested reader is referred to \cite{SIRSILbook} for further details.

\begin{fact}\label{F1}\em
Using the standard symbol $\dot {(\cdot)}:=\frac{d(\cdot)}{dt}$, to denote the derivative with respect to the new time variable $t\in\mathbb{R}$, we have that an equivalent  representation of the Buck, Boost and Buck-Boost converters \eqref{conv1}, \eqref{conv2} is given by the following normalized models.\\

\bul Buck converter:
\begin{equation}\label{bck}
\begin{aligned}
	\dot x_1&= -x_2 +u,~\dot x_2=  x_1- h(x_2)
\end{aligned}
\end{equation}
\bul Boost and Buck-Boost converter:
\begin{equation}\label{bst}
\begin{aligned}
	\dot x_1&= -g(x_2)u +1,~ \dot x_2=  x_1u - h(x_2),
\end{aligned}
\end{equation}
where,  $g:\mathbb{R}_+ \to \mathbb{R}_+$ is given by
\begin{equation}\label{gx2}
g(x_2)=\begin{cases}
	x_2\; &\text{for the Boost converter,}\\
	x_2+1 \;&\text{for the Buck-Boost converter},
\end{cases}
\end{equation}
and  $h:\mathbb{R}_+ \to \mathbb{R}_+$ is an arbitrary, differentiable function.
\qed
\end{fact}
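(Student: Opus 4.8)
The plan is to produce explicit linear rescalings of state and time that carry the averaged models \eqref{conv1}--\eqref{conv2} into the claimed normalized forms, and to identify the normalized load $h$ as a positively rescaled copy of $\HH$. Concretely, I would seek positive constants $a,b,\omega_0$, set $x_1:=a\,i$, $x_2:=b\,v$ and introduce the new time $t:=\omega_0\,\tau$, so that $\tfrac{d}{d\tau}=\omega_0\tfrac{d}{dt}$, then substitute into the converter equations and match coefficients.

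Carrying this out for the Buck converter \eqref{conv1}: the first equation becomes $\dot x_1 = \tfrac{a}{L\omega_0}\big(-\tfrac{1}{b}x_2 + Eu\big)$ and the second $\dot x_2 = \tfrac{b}{C\omega_0}\big(\tfrac{1}{a}x_1 - \HH(\tfrac{1}{b}x_2)\big)$. Imposing unit coefficients on the $x_2$- and $u$-terms of the first equation forces $\tfrac{a}{L\omega_0 b}=1$ and $\tfrac{aE}{L\omega_0}=1$, hence $b=1/E$ and $a=L\omega_0/E$; imposing a unit coefficient on the $x_1$-term of the second equation, $\tfrac{b}{C\omega_0 a}=1$, then reduces---using the two previous relations---to $LC\omega_0^2=1$, i.e.\ $\omega_0=1/\sqrt{LC}$ (the natural resonant frequency), and consequently $a=\tfrac1E\sqrt{L/C}$. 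The leftover term in the second equation defines $h(x_2):=\tfrac{b}{C\omega_0}\HH(x_2/b)=\tfrac1E\sqrt{L/C}\,\HH(E x_2)$, which is a differentiable map $\rea_+\to\rea_+$ since $\HH$ is (see \eqref{ilhv}) and the multiplying constants are positive. With these substitutions the Buck equations read exactly \eqref{bck}.

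For the Boost/Buck-Boost family \eqref{conv2} I would apply the same change of variables with the same $a,b,\omega_0$. The capacitor equation transforms exactly as before, giving $\dot x_2 = x_1 u - h(x_2)$ with the identical $h$. The inductor equation becomes $\dot x_1=\tfrac{a}{L\omega_0}\big(-\tfrac1b x_2 u + E + {\tt g}(u)\big) = -x_2 u + 1 + \tfrac{a}{L\omega_0}{\tt g}(u)$; for the Boost, ${\tt g}\equiv0$, so $\dot x_1=-x_2 u+1$, while for the Buck-Boost, ${\tt g}(u)=-Eu$ yields $\tfrac{a}{L\omega_0}{\tt g}(u)=-\tfrac{aE}{L\omega_0}u=-u$, so $\dot x_1=-(x_2+1)u+1$. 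These coincide with \eqref{bst} for $g$ as in \eqref{gx2}.

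The computation is elementary; the only point worth flagging is that a single time--state rescaling must normalize both scalar dynamics at once. This is not automatic a priori, but it succeeds here because there are exactly as many free scaling constants $(a,b,\omega_0)$ as coefficients that must be set to unity once the load term is freely absorbed into the definition of $h$, and because in the Buck-Boost case the constant source term and the input term ${\tt g}(u)$ in the inductor equation share the common prefactor $a/(L\omega_0)$, so normalizing the former automatically normalizes the latter. Hence no compatibility obstruction arises and the asserted equivalence follows.
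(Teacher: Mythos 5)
Your derivation is correct and is exactly the standard time-and-variable scaling the paper invokes (and omits as "straightforward"): the constants you obtain, $a=\tfrac1E\sqrt{L/C}$, $b=\tfrac1E$, $\omega_0=\tfrac1{\sqrt{LC}}$, reproduce the transformation \eqref{transf} of Remark \ref{rem2}, and your $h(x_2)=\tfrac1E\sqrt{L/C}\,\HH(Ex_2)$ is consistent with \eqref{hx2}--\eqref{param} for the load \eqref{ym}. Nothing is missing; your closing observation that the Buck-Boost source and ${\tt g}(u)$ terms share the prefactor $a/(L\omega_0)$ is precisely why no compatibility obstruction arises.
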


\begrem
\lab{rem2}
The variables $x_1$, $x_2$, and $t$ are related to $i$, $v$ and $\tau$ as follows
\begin{align}\label{transf}
x_1= 	\frac{1}{E}\sqrt{\frac{L}{C}} i ,\; x_2= \frac{1}{E}v,\; t=\frac{1}{\sqrt{CL}}\tau.
\end{align}
\endrem
\begrem
\lab{rem3}
For the case when the load is given by \eqref{ym} we have that\footnote{For this particular load, it will be assumed throughout the paper that $x_2(t)$ is bounded away from zero.} 
\begin{equation}\label{hx2}
h(x_2)= \RR x_2 + \frac{\PP}{x_2},
\end{equation}
where
\begin{align}\label{param}
\RR:=G\sqrt{\frac{L}{C}},\;\; \PP:=\frac{P_{\tt{cpl}}}{E^2}\sqrt{\frac{L}{C}}.
\end{align}
\endrem
\subsection{Assignable equilibrium set}
\lab{subsec34}
The control objective is to regulate the variable $x_2$ to some arbitrary value $x_{2\star}>0$. Let  $u_\star$ and $x_\star$ be the state and control equilibrium values such that $x_2=x_{2\star}$. Applying \cite[Proposition B.1]{ORTetalbookpid}, we give below the {\em assignable equilibrium set} with the corresponding---uniquely defined---constant control value for the three converters described above.\\ 

\bul Buck converter:
\begin{align}\label{eqbck}
\mathcal{E}_{\mathrm{bck}}:= \left\{ (x,u)\in\mathbb{R}^3 | x_1=h_\star, x_2=x_{2\star}, u= x_{2\star}  \right\}.
\end{align}

\bul Boost and Buck-Boost converters:
\begin{align}\label{eqbs}
\mathcal{E}_{\mathrm{bb}}:= \left\{ (x,u)\in\mathbb{R}^3 | x_1=g_\star h_\star,x_2=x_{2\star}, u=\frac{1}{g_\star} \right\}.
\end{align}
where the  mapping $g(x_2)$ is defined in \eqref{gx2}. \\

The following ``weak" assumption will be imposed in the sequel. 

\begin{assumption}\label{ass1}\em
The function $h(x_2)$, relating the load current and its voltage, satisfies 
\begequ
\lab{hprista}
h'_\star  >0.
\endequ
\end{assumption}

\begrem \label{condx2}
Setting $h(x_2)$ as in \eqref{hx2}, and then replacing it into \eqref{hprista} results in the following condition
$$
x_{2\star}>\sqrt{\frac{\PP}{\RR}}.
$$
\endrem
%
\section{Control Design of the Power Converters}
\lab{sec4}
%
In this section we apply Proposition \ref{pro1} to design stabilizing, {\em voltage-feedback}, controllers for the Buck, Boost and Buck-Boost DC-to-DC power converters.\footnote{If we don't impose any constraint on the sign of the state components, the controllers are {\em globally} stabilizing. However, we recall that, due to topological constraints, the converter signals are restricted to live in the positive orthant $\mathbb{R}_+^3$.}  It is important to recall that the main difficulty in the control of these converters is that the output signal, that is the voltage fed to the load, has the behavior of a {\em non-minimum phase} output \cite{ISIbook}. Due to this behavior the vast majority of the practical controllers designed for these converters are of the {{\em indirect} type,} where a PI loop is placed around the current, whose reference is determined to match the desired voltage value. This approach is clearly extremely fragile as the derivation of the current reference requires the exact knowledge of the system parameters or the design of an outer control loop for the converter output voltage, which increases the complexity of the whole controller. This issue has been extensively discussed in the literature, see {\em e.g.} \cite[Section 4.3.A]{ORTetalbook}, \cite{SIRSILbook}  and references therein. Interestingly, it has recently been shown in \cite{FANORTGRI} that it is possible to stabilize the Boost converter applying a single PI controller with input the load voltage, provided the PI gains are suitably selected.   
%
\subsection{Buck Converter}
In this section we design the IDA-PBC for the Buck converter.
\subsubsection{Proposed controller} 

\begpro\label{pro2}\em
Consider the Buck converter dynamics \eqref{bck}. Fix $x_{2\star}$ verifying \eqref{ass1}. The mappings 
\begin{equation}
	\begin{aligned}\label{qbck}
		{\alpha}_1(x) =-\frac{1}{k} ,\;
		{\alpha}_2(x)  = 0,\;
		{\beta}(x)  =1,
	\end{aligned}
\end{equation}
and the control signal
\begequ
\lab{ubuc}
\hat u(x_2)=-k[ h(x_2)-x_{1\star}] + x_2
\endequ
with $k>0$ a tuning gain, fulfill the conditions \textbf{C1}-\textbf{C6} of Proposition \ref{pro1}. Consequently, the  system in closed-loop with the control $u=\hat u(x_2)$ has an equilibrium point $(x_{1\star},x_{2\star})\in\mathcal{E}_{\mathrm{bck}}$ which is asymptotically stable.

Moreover, if the load is of the form \eqref{ym}, a Lyapunov function for the equilibrium is given by
\begin{align}
	P(x)=&\frac{1}{2}(x_1-x_{1\star})^2 +{ {{k\RR(x_2-x_{2\star})} \over 2}}
\Big[ x_2+x_{2\star} -\frac{2x_{1\star}}{\RR}\Big]+ k \PP\ln \frac{x_2}{x_{2\star}}. 
\label{Potbck}
\end{align}   
\endpro

\begin{proof}
	The proof  is given in Appendix \ref{appb}.
\end{proof}

\begrem
\lab{rem4}
{Notice that if the condition of {\bf Assumption} \ref{ass1} is  not satisfied, that is, if $h'_\star <0$, we can set the free constant $k<0$ and take $\alpha_1={1 \over k}$. }	
\endrem

\subsubsection{Numerical estimation of the region of attraction}
\lab{subsubsecbc}

Consider the system \eqref{bck} with the load \eqref{ym} in closed loop with \eqref{ubuc}. Consider the circuit parameters given in Table \ref{para}, and evaluate the values of the constants $\RR$ and $\PP$, according to \eqref{param}. Also, set the controller gain $k=0.1$. The desired equilibrium point is $(x_{1\star},x_{2\star})=(0.0285,0.833)$---according to \eqref{transf}, corresponds to an output voltage setpoint of $v_\star=20$V. {From straightforward computations, it can be seen that this value verifies \textbf{Assumption \ref{ass1}}  and \textbf{Remark \ref{condx2}}.}

We will show simulated results of the closed-loop system in the phase plane. Fig. \ref{bckf} shows a set of trajectories, together with some level sets of the Lyapunov function $P(x)$, that is
$$
\Omega_{\bar P}:=\{x \in \rea^2\;|\;P(x) \leq {\bar P} \}
$$
{{with $\bar P \in \{0.000215,~0.00025,0.0003\}$.}} 
 
 Fig. \ref{bckf} (b) displays an enlarged view of the selected area in Fig. \ref{bckf} (a). As it can be seen from this figure, there exists an invariant set---in gray---in the {\em first quadrant} of the plane, such that all trajectories starting in this set remain in this set and converge to the desired equilibrium point---that is, trajectories starting in this set remain physically valid. 

 \begin{table}[!t]
	\renewcommand{\arraystretch}{1.3}
	\caption{Parameters of the Experimental Setup and Simulated System}
	\centering
 \label{para} 
	\resizebox{0.25\columnwidth}{!}{
		\begin{tabular}{l l l}
			\hline\hline \\[-3mm]
			\multicolumn{1}{c}{Parameter} & \multicolumn{1}{c}{Value} & \multicolumn{1}{c}{{Unit}}  \\[1.6ex] \hline
			 $E$ & $24$ & V \\
   $L$  &   $1$& mH             \\
    $C$ &     $330$ & $\mu$F           \\
    {$G $} & ${0.0167}$ & $\mho$\\
    $P_{\tt cpl}$  & $1.2$& W    \\ [1.4ex]
			\hline\hline
		\end{tabular}
	}
\end{table}

\begin{figure}[!ht]
  \centering
  \begin{minipage}[b]{0.45\linewidth}
    \centering
    \includegraphics[width=\linewidth]{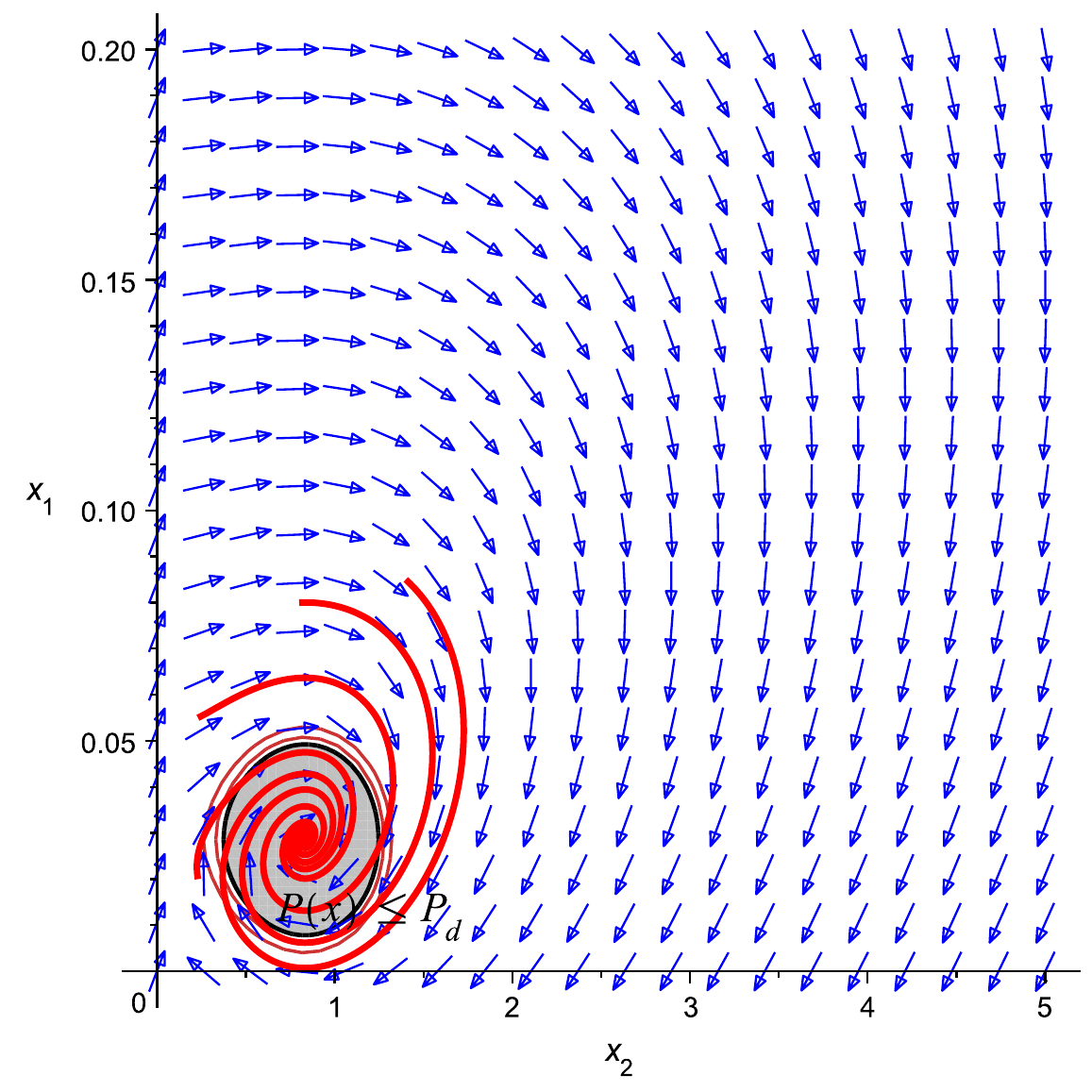}
    \\
    (a) The phase plot of the Buck converter
  \end{minipage}
  \hfill
  \begin{minipage}[b]{0.45\linewidth}
    \centering
    \includegraphics[width=\linewidth]{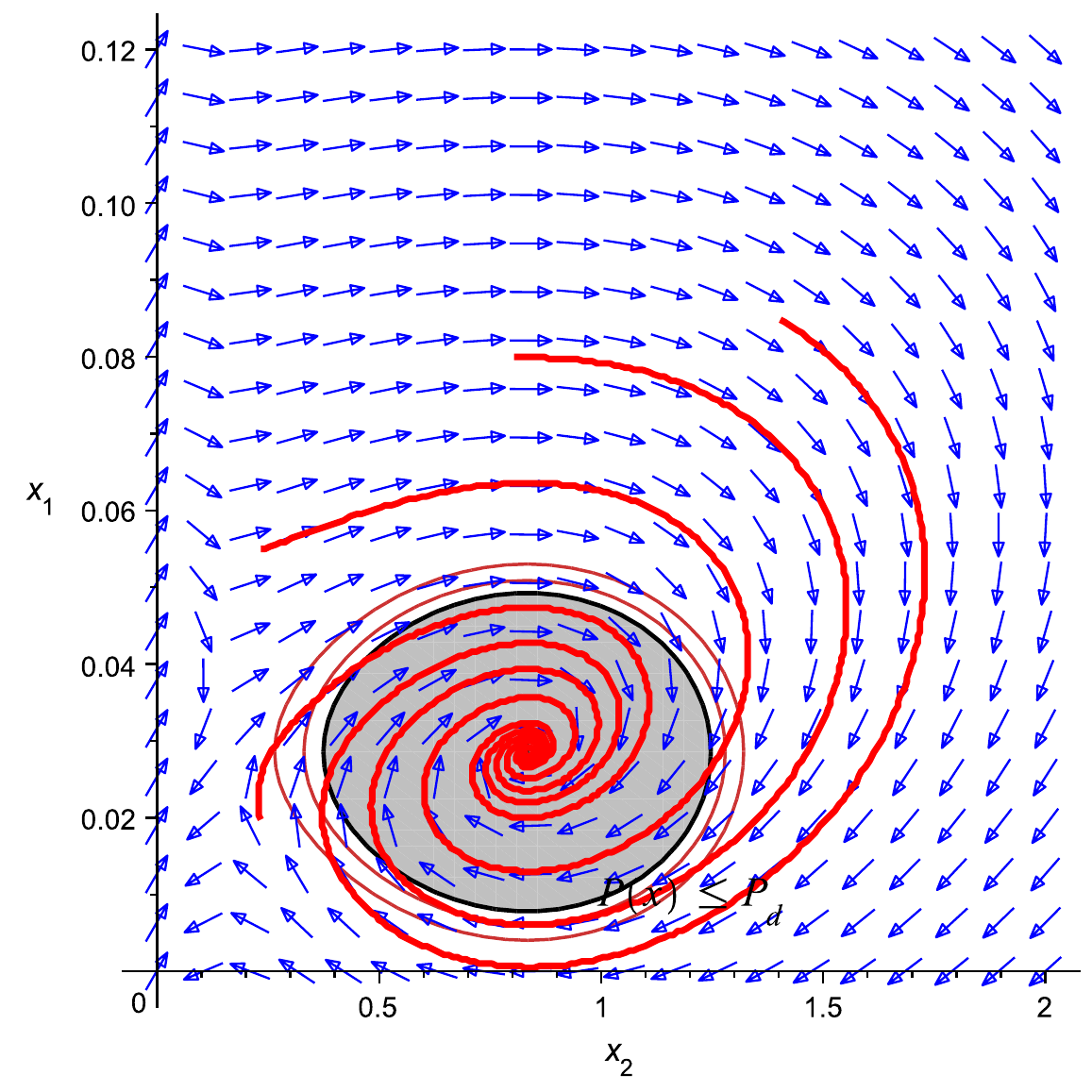}
    \\
    (b) Zoomed-in region of (a)
  \end{minipage}
  \caption{Phase plot of the Buck converter with zoomed-in view}  \label{bckf}
\end{figure}

%
\subsection{The Boost and Buck-Boost converters}
In this subsection we apply Proposition \ref{pro1} to design, {\em simultaneously}, the PBC for the Boost and the Buck-Boost converters.
\subsubsection{Proposed controller}

\begpro\label{pro3}\em
Consider the Boost or Buck-Boost converters dynamics given in \eqref{bst} and \eqref{gx2}, respectively. Fix $x_{2\star}$ verifying \eqref{ass1}. The mappings 
\begin{equation}
	\begin{aligned}\label{qbb}
		{\alpha}_1(x,u) &=-x_1,\;
		{\alpha}_2(x,u)  = 0 ,\;
		{\beta}(x,u) =-g(x_2)+ {k \over u},
	\end{aligned}
\end{equation}
and the control signal
\begequ
\label{ubst}
\hat u(x_2)= k\frac{h(x_2)}{h(x_2)g(x_2)+ c},
\endequ
where
$$
c:=(k-1)h_\star g_\star
$$
and the free constant $k$ is selected such that 
{ \begin{equation}\label{ubstcon}
	k\geq 1 +\frac{h_\star}{h'_\star g_\star},
\end{equation}}
fulfill the conditions \textbf{C1}-\textbf{C6} of Proposition \ref{pro1}. Consequently, both  systems in closed-loop with the control $u=\hat u(x_2)$ have an equilibrium point $(x_{1\star},x_{2\star})\in\mathcal{E}_{\mathrm{bb}}$ which is asymptotically stable.

Moreover, if the load is of the form \eqref{ym} a Lyapunov function for the equilibrium is given by: 

\begin{itemize}
	\item For the Boost Converter: 
	\begin{align*}
		P(x) =& \frac{1}{2} (k-1) \left(x_1-x_{1\star} \right)^2 +\frac{c}{2\RR}\ln\left[\frac{(x_2h(x_2) + c)^k}{ x_2 h(x_2)}\right]  
+{\frac{c}{2\RR} \ln \left(\frac{(kx_{1\star})^k}{x_{1\star}}\right)},
	\end{align*}
	\item For the  Buck-Boost Converter:
	\begin{align*}
		P(x)=& \frac{1}{2} (k-1) \left(x_1-x_{1\star} \right)^2+ \frac{k}{2}x_2^2 
 + k x_2 -\frac{c}{2\RR}\ln\left(\frac{x_2h}{x_{2\star} h_\star}\right) 
 - k \int_{x_{2}(0)}^{x_2(t)} \frac{g^2(s) h(s)}{g(s)h(s)+c} ds.
	\end{align*}  
\end{itemize}
\endpro
\begin{proof}
	The proof is given in Appendix \ref{appc}.
\end{proof}

\subsubsection{Numerical estimation of the region of attraction: Buck-Boost converter}
\lab{subsubsecbbc}

Consider the system \eqref{bst} in closed loop with \eqref{ubst}. Consider the circuit parameters given in Table \ref{para} and set the controller gain $k=3$. The desired equilibrium point is $(x_{1\star},x_{2\star})=(0.0881,1.25)$---which, according to \eqref{transf}, corresponds to an output voltage setpoint of $v_\star=30$V. {From straightforward computations, it can be seen that this value verifies \textbf{Assumption \ref{ass1}}  and \textbf{Remark \ref{condx2}}.}

We will show simulated results of the closed-loop system in the phase plane. {Fig. \ref{bbf} shows a set of trajectories, together with some level sets of the Lyapunov function $P(x)$ with {$\bar P\in \{0.0025,~0.004,~  0.005\}$.} Fig. \ref{bbf} (b) shows an enlarged view of Fig. \ref{bbf} (a)---in gray---an approximation of a subset of the domain of attraction that guarantees the whole trajectory remains in the positive (physically meaning) orthant. }

\begin{figure}[!ht]
  \centering
  \begin{minipage}[b]{0.45\linewidth}
    \centering
    \includegraphics[width=\linewidth]{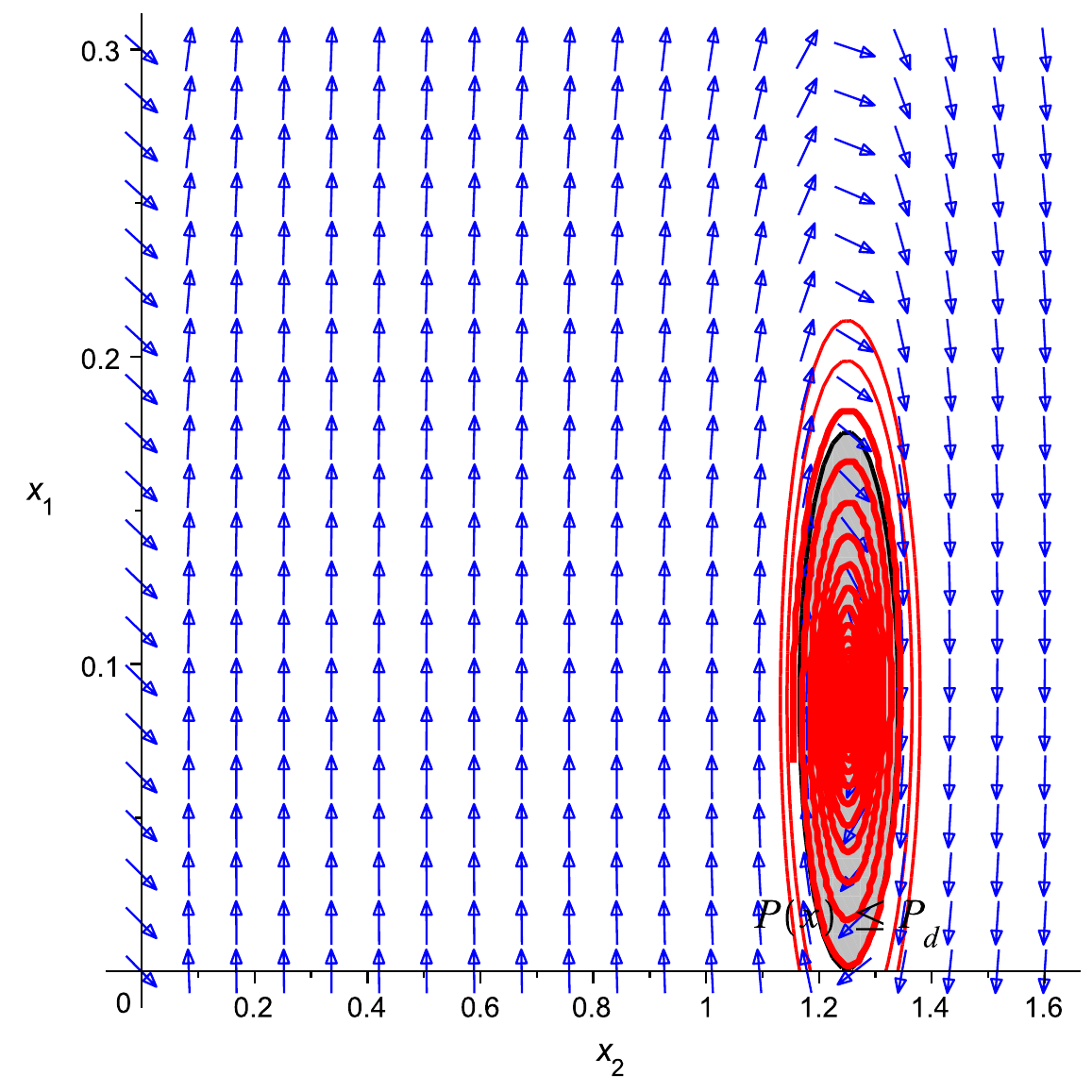}
    \\
    (a) The phase plot of the Buck-Boost converter
  \end{minipage}
  \hfill
  \begin{minipage}[b]{0.45\linewidth}
    \centering
    \includegraphics[width=\linewidth]{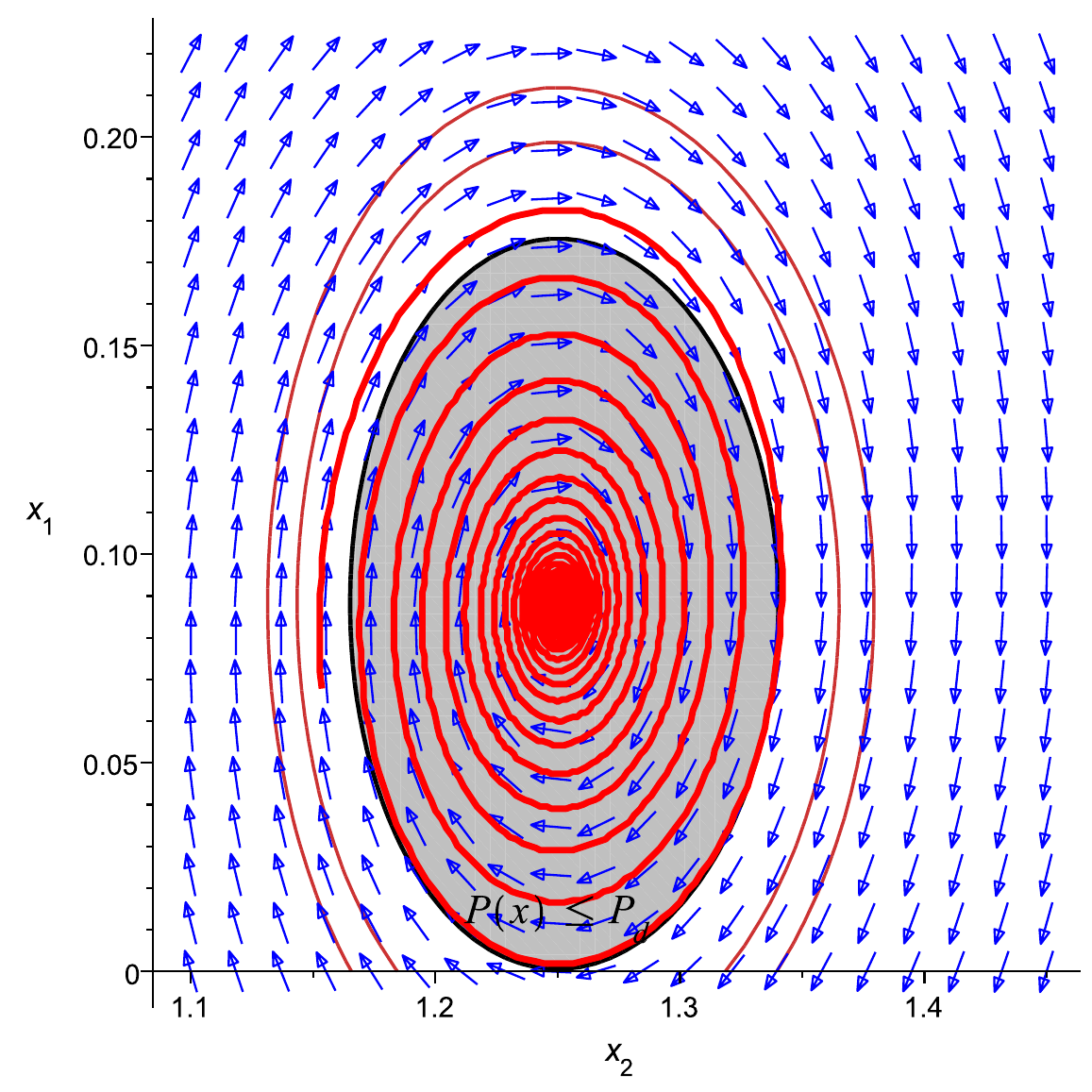}
    \\
    (b) Zoomed-in region of (a)
  \end{minipage}
  \caption{Phase plot of the Buck-Boost converter with zoomed-in view}  \label{bbf}
\end{figure}
%
\section{Indirect Adaptive Control for Load \eqref{ym}}
\lab{sec5}
%
We consider henceforth a load consisting of a CPL in parallel with a resistor as shown in Fig. \ref{ld}.  When the parameters $G$ and $P_{\mathrm{cpl}}$ are known,  $\PP$ and $\RR$ can be obtained from \eqref{param}.  To obtain the corresponding control laws for each converter, it only suffices to replace \eqref{hx2} into \eqref{ubuc} or  \eqref{ubst}.  In this section we assume that these parameters are {\em unknown} and present an identification algorithm that will generate an estimate of $G$ and $P_{\mathrm{cpl}}$, that will converge in {\em finite time} to their true value. This parameter estimates are used, on-line, in the controller yielding an indirect adaptive control scheme.

To that purpose, consider the relation between the load current $i_L$ {and $h(x_2)$ given in \eqref{hx2} that, taking into account the scaling factors \eqref{transf} and \eqref{hx2} yields the linear regression equation (LRE) 
\begin{align}\label{lre2}
	i_L(t)& = {E\sqrt{\frac{C}{L}} h(x_2(t))} =:\phi^\top(t) \theta,
\end{align}
where the vector signal $\phi(t) \in\rea^2$ and the parameter vector $\theta\in\rea^2$ are given by
$$
\phi := \begmat{ x_2 \\ \frac{1}{x_2} }, \; {\theta= \begmat{ \theta_1 \\ \theta_2 }= E\sqrt{\frac{C}{L}}\begmat{ \RR \\ \PP }=\begmat{ GE \\ \frac{P_{\tt cpl}}{E}}}.
$$

	\begin{remark}\em
	\lab{rem7}
		As mentioned above, the unknown parameters in an experimental setup are typically those of the load, i.e., $P_{\tt cpl}$ and $G$. From \eqref{param}, these constants can be recovered from $\theta$ as follows
		$$G= \frac{\theta_1}{E},\;\; P_{\tt cpl}=E\theta_2.$$
	\end{remark}

To implement the parameter estimator using the LRE \eqref{lre2} it is clear that we need to impose the following.

\begin{assumption}\em\label{as2}
	The load current  $i_L$ is {\em measurable}.
\end{assumption}

We are in position to state the main result of this section, where we propose to estimate the parameters $\theta$ using the standard least squares algorithm with FCT reported in \cite{ORTetalAUT_25}.

\begin{proposition}\em
	Consider the LRE \eqref{lre2} and Assumption \ref{as2}. Assume the vector $\phi(t)$ is IE \cite{KRERIE,TAObook} and bounded. That is, there exists $T_c>0$ and $\kappa>0$ such that
	\begequ
	\lab{ie}
	\int_0^{T_c} \phi(s)\phi^\top(s)ds \geq \kappa I_2.
	\endequ
	Define the standard LS estimator factor with forgetting factor
	\begin{subequations} \label{adap}
		\begin{align}
			\dot{\hat \theta }&= \gamma F \phi [i_L-\phi^\top \hat\theta], \;\hat\theta(0)=\theta_0\in\mathbb{R}^2,\\
			\dot F&=  -\gamma F \phi \phi^\top F+\chi F,\;F(0)=\frac{1}{f_0}I_2\\
			\dot z&=  -\chi z,\;z(0)=1,\\
			\chi&=  \chi_0\left(1-\frac{||F||}{\sigma}\right),
		\end{align}
	\end{subequations}
	with tuning gains the scalars $\gamma>0$, $f_0>0$, $\chi_0>0$ and $\sigma\geq\frac{1}{f_0}$.  For $t\geq T_c$, define the signal
	\begin{align*}
		\theta^{FCT}:&=  [I_2-zf_0F]^{-1}[\hat\theta-zf_0F\theta_0].
	\end{align*}
	The following statements are valid.
	\begenu[{\bf S1}]
	\item For all initial conditions this signal satisfies 
	$$
	\theta^{FCT}(t)=\theta,\;\forall t\geq T_c.
	$$
	\item All the signals are bounded.
	\item Assume the load of the converters is of the form \eqref{ym} with $h(x_2)$ verifying {\bf Assumption} \ref{ass1} for the desired voltage $x_{2\star}$. Consider the controllers $\hat u(x_2)$ given in Propositions \ref{pro2} and \ref{pro3} with $h(x_2)$ replaced by
	$$
	\hat h(x_2)= \theta_1^{FCT} x_2 + \frac{\theta_2^{FCT} }{x_2}.
	$$
	The associated equilibrium points $(x_{1\star},x_{2\star})$ are asymptotically stable.
	\endenu
\end{proposition}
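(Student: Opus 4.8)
The plan is to treat the three claims in turn, the crux being \textbf{S1}, which is a re-derivation of the finite-convergence-time property of the least-squares estimator with forgetting factor of \cite{ORTetalAUT_25} specialized to the LRE \eqref{lre2}. First I would introduce the estimation error $\tilde\theta:=\hat\theta-\theta$; substituting \eqref{lre2} into \eqref{adap} gives the linear time-varying error equation $\dot{\tilde\theta}=-\gamma F\phi\phi^\top\tilde\theta$. The key remark is that $Y:=zf_0F$ is a \emph{scalar} multiple of $F$, hence commutes with $F$, and satisfies $\dot Y=-\gamma Y\phi\phi^\top F$; consequently $w:=\tilde\theta-Y\tilde\theta(0)$ obeys the \emph{same} equation $\dot w=-\gamma F\phi\phi^\top w$ and vanishes at $t=0$ since $Y(0)=f_0\cdot\frac{1}{f_0}I_2=I_2$. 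By uniqueness of solutions $\tilde\theta(t)=z(t)f_0F(t)\tilde\theta(0)$ for all $t$, which rearranges to $(I_2-zf_0F)\theta=\hat\theta-zf_0F\theta_0$, so \textbf{S1} reduces to the invertibility of $I_2-zf_0F(t)$ for $t\ge T_c$. For this I would differentiate $z^{-1}F^{-1}$, using $\frac{d}{dt}F^{-1}=\gamma\phi\phi^\top-\chi F^{-1}$ and $\frac{d}{dt}z^{-1}=\chi z^{-1}$, to get the closed form $z^{-1}(t)F^{-1}(t)=f_0I_2+\gamma\int_0^t z^{-1}(s)\phi(s)\phi^\top(s)\,ds$. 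Since $z$ is non-increasing from $z(0)=1$ (established in the next step) $z^{-1}\ge1$, so the IE condition \eqref{ie} yields $z^{-1}(t)F^{-1}(t)\succeq(f_0+\gamma\kappa)I_2$ for $t\ge T_c$, i.e. $zf_0F(t)\preceq\frac{f_0}{f_0+\gamma\kappa}I_2\prec I_2$; hence $I_2-zf_0F(t)\succ0$ and $\theta^{FCT}(t)=\theta$ for all $t\ge T_c$.

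For \textbf{S2} I would show that $\{\|F\|\le\sigma\}$ is forward invariant: $\|F(0)\|=1/f_0\le\sigma$ by hypothesis, and on the boundary $\|F\|=\sigma$ we have $\chi=0$, so $\dot F=-\gamma F\phi\phi^\top F\preceq0$ and $\|F\|$ cannot grow. Hence $\chi=\chi_0(1-\|F\|/\sigma)\ge0$, so $z$ is non-increasing with $z(t)\in(0,1]$; boundedness of $\hat\theta$ then follows from $\tilde\theta=zf_0F\tilde\theta(0)$, $\phi$ is bounded by assumption, and $\theta^{FCT}\equiv\theta$ is constant --- so all the estimator signals are bounded.

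For \textbf{S3} I would exploit the \emph{exactness} obtained in \textbf{S1}: for every $t\ge T_c$, $\theta^{FCT}(t)=\theta$, so $\hat h$ reproduces the true load characteristic $h$ and the closed loop becomes exactly $\dot x=f(x,\hat u(x_2))$ with the nominal controller of Proposition \ref{pro2} (Buck) or Proposition \ref{pro3} (Boost / Buck-Boost) and the true $h$; since $x_{2\star}$ satisfies \textbf{Assumption \ref{ass1}}, those propositions give that $(x_{1\star},x_{2\star})$ is asymptotically stable with the Lyapunov functions displayed there. As that conclusion does not depend on the state reached at $t=T_c$, it only remains to rule out finite-time escape on the compact interval $[0,T_c)$ --- a well-posedness matter for the certainty-equivalence loop driven by the (bounded, by \textbf{S2}) raw estimate $\hat\theta$ --- and then to conclude: reading the estimation error, which is identically zero for $t\ge T_c$, as a perturbation of the nominal closed loop that vanishes after finite time, asymptotic stability of $(x_{1\star},x_{2\star})$ for the overall adaptive scheme follows.

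The part I expect to require the most care is precisely this transient phase $t\in[0,T_c)$: one must keep the certainty-equivalence controller well posed along it (in particular a non-vanishing denominator in \eqref{ubst} and $\hat h'_\star>0$ for Proposition \ref{pro3}) and prevent finite escape before the estimate becomes exact --- the usual pre-convergence caveat of certainty-equivalence adaptive control, handled by restricting the initial data and/or a suitable parametrisation, and not affecting the post-$T_c$ argument. There is also the customary mild circularity that the IE hypothesis \eqref{ie} on $\phi=\col(x_2,1/x_2)$ is really a closed-loop property; this is broken by imposing IE as a standing assumption, together with the standing assumption that $x_2(t)$ remains bounded away from zero, which is also what makes $\phi$ and $h$ well defined.
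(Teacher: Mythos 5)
Your proposal is correct, but it is much more self-contained than the paper's proof, which disposes of \textbf{S1} and \textbf{S2} by citing \cite{ORTetalAUT_25} and declares \textbf{S3} to follow ``trivially'' from FCT convergence plus Propositions \ref{pro2} and \ref{pro3}. What you have done differently is to re-derive the finite-convergence-time property from first principles, and your derivation is the right one: the error equation $\dot{\tilde\theta}=-\gamma F\phi\phi^\top\tilde\theta$, the observation that $Y:=zf_0F$ satisfies the same homogeneous equation (using that $Y$ is a scalar multiple of $F$, so $Y\phi\phi^\top F=F\phi\phi^\top Y$), hence $\tilde\theta(t)=z(t)f_0F(t)\tilde\theta(0)$, and the integral identity $z^{-1}F^{-1}(t)=f_0I_2+\gamma\int_0^t z^{-1}(s)\phi(s)\phi^\top(s)\,ds$, which together with IE and $z^{-1}\ge1$ gives $zf_0F\preceq\frac{f_0}{f_0+\gamma\kappa}I_2\prec I_2$ and hence invertibility of $I_2-zf_0F$ for $t\ge T_c$ --- this is precisely the mechanism of the cited estimator, so your argument buys the reader a proof they would otherwise have to chase into the reference. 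Your treatment of \textbf{S3} is also more honest than the paper's: the certainty-equivalence argument after $T_c$ is indeed immediate (the controller coincides exactly with the nominal one, and the conclusion of Propositions \ref{pro2}--\ref{pro3} applies from whatever state is reached at $T_c$, within the relevant domain of attraction), but the transient issues you flag --- well-posedness of \eqref{ubst} with a possibly wrong $\hat h$ (non-vanishing denominator, sign of $\hat h'_\star$), absence of finite escape on $[0,T_c)$, and the closed-loop nature of the IE hypothesis on $\phi=\mbox{col}(x_2,1/x_2)$ --- are genuine caveats that the paper's one-line proof silently absorbs into the word ``trivially.'' No step of your argument is wrong; the only cosmetic point is that you invoke $z^{-1}\ge1$ in \textbf{S1} before establishing the forward invariance of $\{\|F\|\le\sigma\}$ in \textbf{S2}, so in a polished write-up that invariance (and hence $\chi\ge0$, $z\in(0,1]$) should be stated first.
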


\begin{proof}
	The proof of {\bf S1} and {\bf S2} is given in  \cite{ORTetalAUT_25}. The claim {\bf S3} follows trivially from the FCT global convergence of the parameter estimates and the global stability proofs of Propositions \ref{pro2} and \ref{pro3}. 
\end{proof}

%
\section{Simulation Results}
\lab{sec6}
%
{This section illustrates the performance of the adaptive IDA-PBC for the normalized models of the three converters, with the load represented by $i_L$ (see Eq. \eqref{lre2}) and the physical parameters listed in Table \ref{para}. The unknown parameters of the normalized models are
$$
     \begmat{ \theta_1 \\ \theta_2 }=\begmat{ GE \\ \frac{P_{\tt cpl}}{E} }= \begmat{ 0.4\\ 0.05 }.
$$
}

\subsection{Buck converter} 

The IDA-PBC method is applied with control law \eqref{ubuc} and adaptive law \eqref{adap}, using design parameters $k=0.1$, $\gamma=10$, $\chi_0=1$, $\sigma=10$, $f_0=4$ and the initial conditions $[x_1(0),x_2(0)]^\top=[0.015,1.15]^\top$, $[\hat\theta_1(0),\hat\theta_2(0)]^\top=[0.01,0.002]^\top$.

{The desired equilibrium point is that given in Section \ref{subsubsecbc}}. The simulation results are shown in Fig. \ref{simfig1} and Fig. \ref{simfig2}. In Fig. 7(a)-7(c), the evolution of $x_1$,  $x_2$, and  $u$ as time increases is shown. Notice their convergence to the corresponding equilibrium values, validating the effectiveness of the proposed control method. On the other hand, Fig. \ref{simfig2} shows that the estimated parameters $\hat\theta_1$ and $\hat\theta_2$ quickly converge to their true values, demonstrating the accuracy and reliability of the adaptive laws.}

\begin{figure}[!h]
	\centering
	\begin{minipage}[b]{0.4\linewidth}
		\centering
		\includegraphics[width=\linewidth]{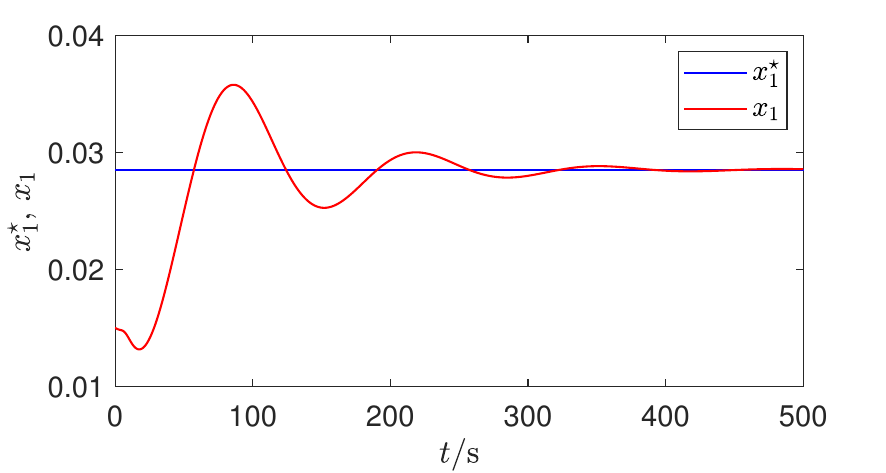}
		\\
		(a) The state $x_1$ and its desired value
	\end{minipage}
	\hfill
	\begin{minipage}[b]{0.4\linewidth}
		\centering
		\includegraphics[width=\linewidth]{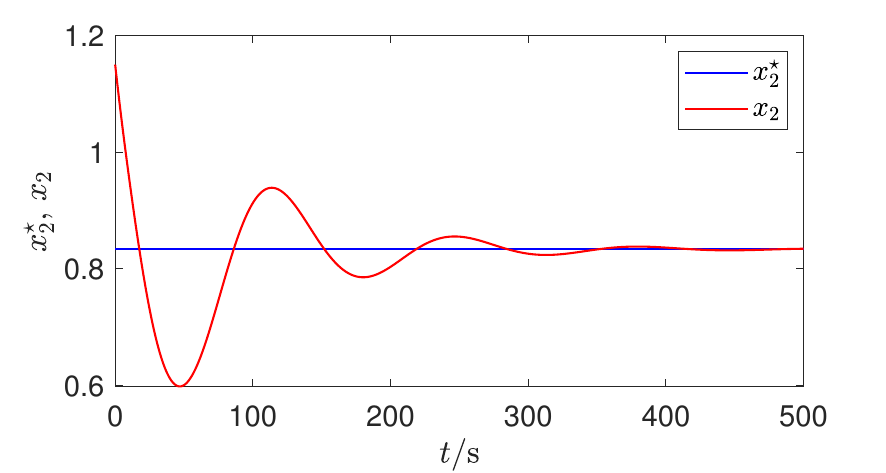}
		\\
		(b) The state $x_2$ and its desired value
	\end{minipage}
	\hfill
	\begin{minipage}[b]{0.4\linewidth}
		\centering
		\includegraphics[width=\linewidth]{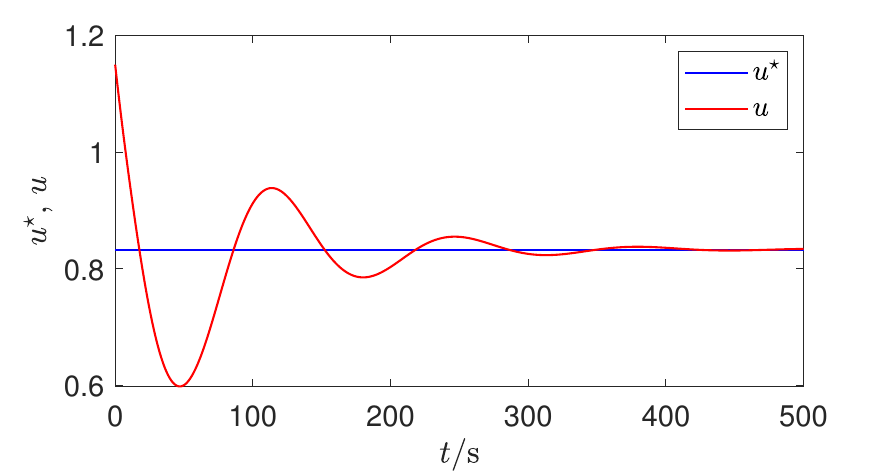}
		\\
		(c) The control input $u$ and its desired value
	\end{minipage}
	\caption{The states and control input of the Buck converter. }
	\label{simfig1}
\end{figure}

\begin{figure}[!h]
  \centering
  \begin{minipage}[b]{0.4\linewidth}
    \centering
    \includegraphics[width=\linewidth]{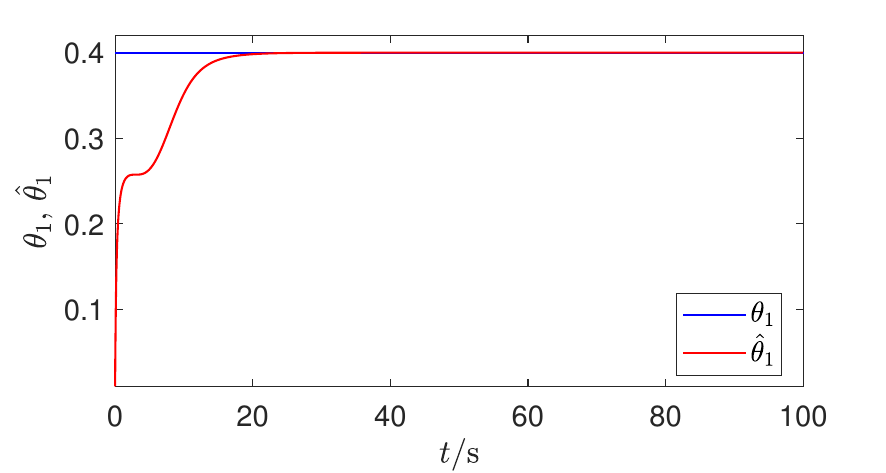}
    \\
    (a) The estimated value $\hat{\theta}_1$ and its desired value
  \end{minipage}
  \hfill
  \begin{minipage}[b]{0.4\linewidth}
    \centering
    \includegraphics[width=\linewidth]{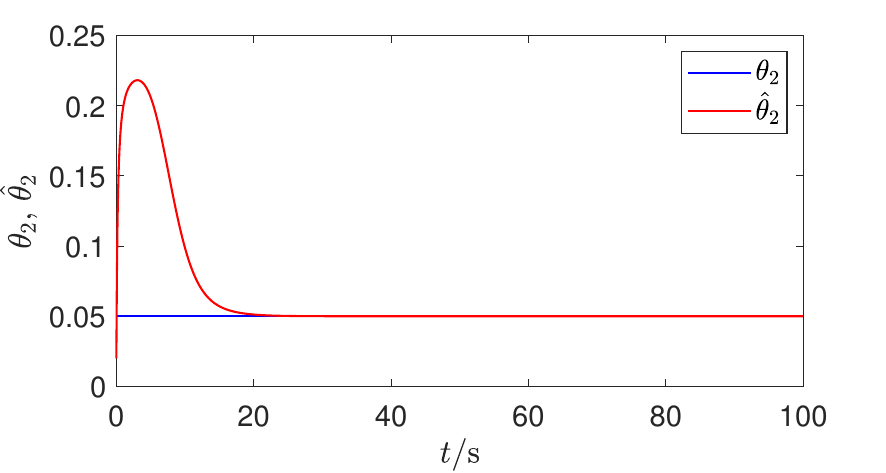}
    \\
    (b) The estimated value $\hat{\theta}_2$ and its desired value
  \end{minipage}
  \caption{The estimated values of the Buck converter.}
  \label{simfig2}
\end{figure}

\subsection{Buck-Boost converter} 
The IDA-PBC method is applied with control law \eqref{ubst} and adaptive law \eqref{adap}, using design parameters $k=1.6523$, $\gamma=15$, $\chi_0=1$, $\sigma=10$, $f_0=4$ and the initial conditions $[x_1(0),x_2(0)]^\top=[0.1,1.3]^\top$, $[\hat\theta_1(0),\hat\theta_2(0)]^\top=[0.01,0.002]^\top$.

{The desired equilibrium point is that given in Section \ref{subsubsecbbc}}. Simulation results are shown in Fig. \ref{simfig3} to Fig. \ref{simfig4}. First, Fig. 9(a)-9(c) show the time response of $x_1$, $x_2$ and $u$, all of which converge to their desired values with small-amplitude but relatively high-frequency oscillations, validating the effectiveness of the proposed control method. Fig. \ref{simfig4} shows that that the proposed adaptive laws enable the estimated parameters $\hat\theta_1$ and $\hat\theta_2$ to accurately track their true values in a short time.

\begin{figure}[!h]
  \centering
  \begin{minipage}[b]{0.4\linewidth}
    \centering
    \includegraphics[width=\linewidth]{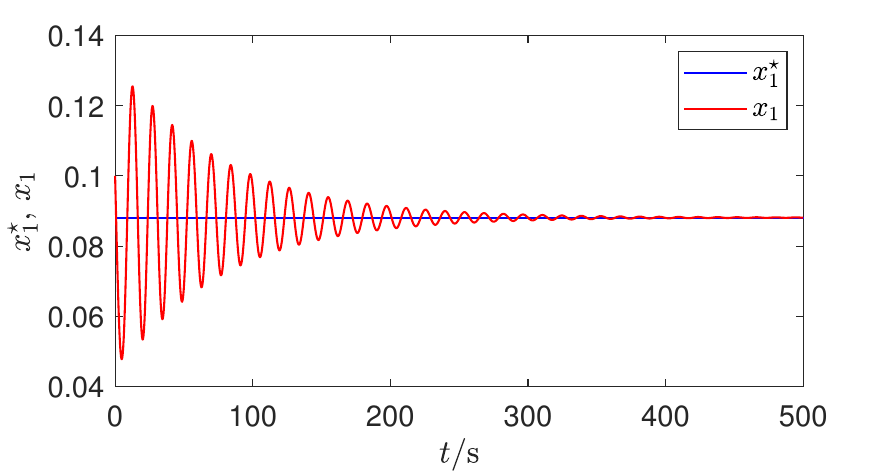}
    \\
    (a) The state $x_1$ and its desired value
  \end{minipage}
  \hfill
  \begin{minipage}[b]{0.4\linewidth}
    \centering
    \includegraphics[width=\linewidth]{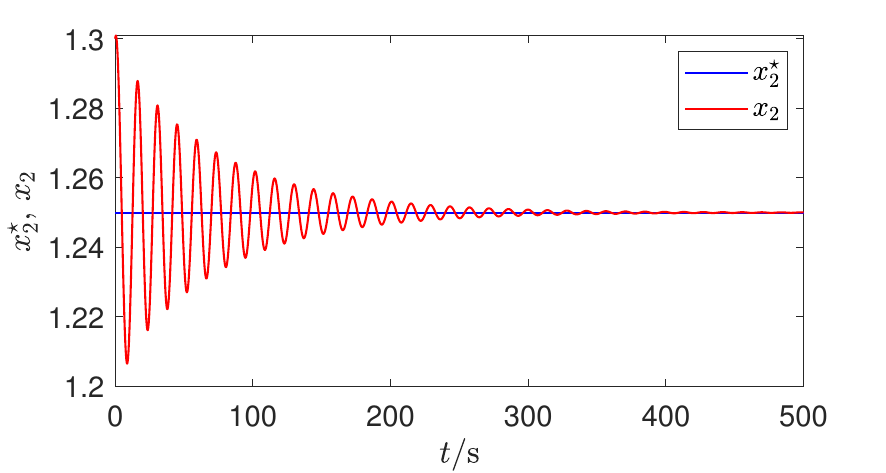}
    \\
    (b) The state $x_2$ and its desired value
  \end{minipage}
  \hfill
  \begin{minipage}[b]{0.4\linewidth}
    \centering
    \includegraphics[width=\linewidth]{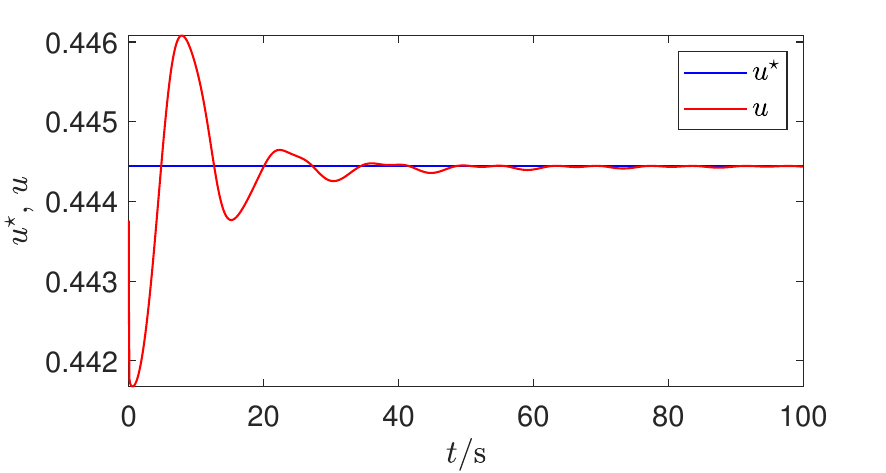}
    \\
    (c) The control input $u$ and its desired value
  \end{minipage}
  \caption{The states and control input of the Buck-Boost converter.}
  \label{simfig3}
\end{figure}

\begin{figure}[!ht]
  \centering
  \begin{minipage}[b]{0.4\linewidth}
    \centering
    \includegraphics[width=\linewidth]{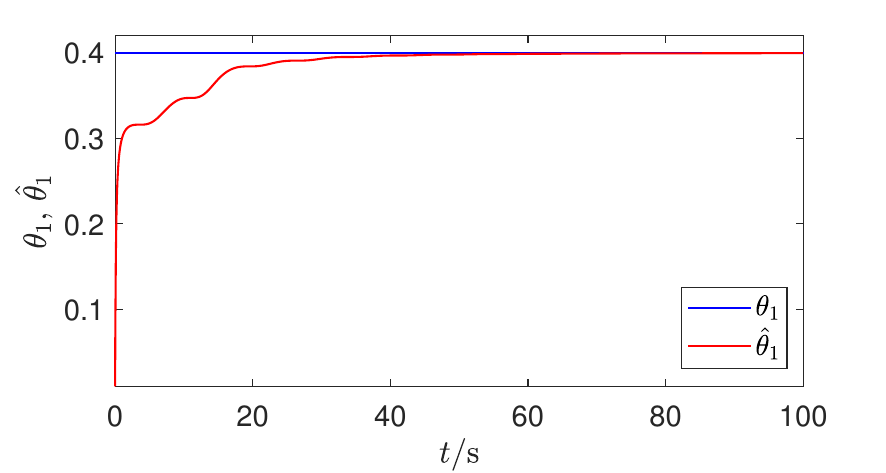}
    \\
    (a) The estimated value $\hat{\theta}_1$ and its desired value
  \end{minipage}
  \hfill
  \begin{minipage}[b]{0.4\linewidth}
    \centering
    \includegraphics[width=\linewidth]{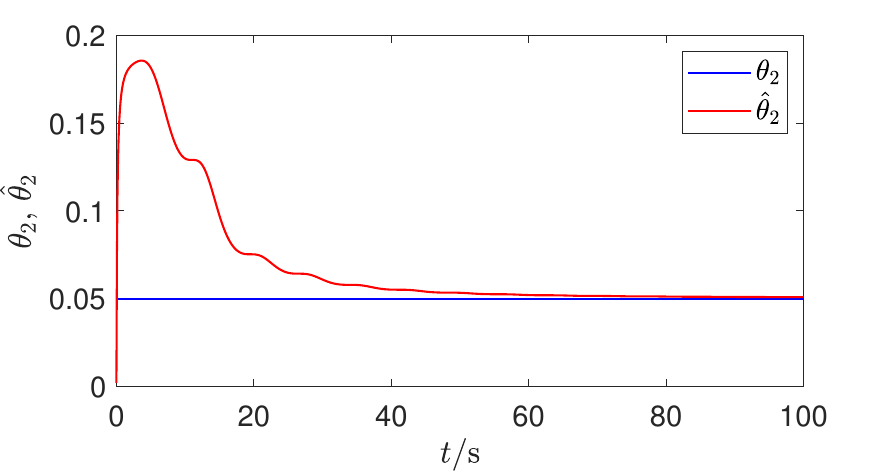}
    \\
    (b) The estimated value $\hat{\theta}_2$ and its desired value
  \end{minipage}
  \caption{The estimated values of the Buck-Boost converter.}
  \label{simfig4}
\end{figure}

\section{Experimental results}
\label{sec7}

{In this section, we assume that {the parameters to be estimated are known} and show experimental results of the proposed control method for three converters. The experimental setup incorporates a combination of $R$ and $P_{\tt cpl}$, as shown in Fig. \ref{setup}. The experimental procedure is as follows: the control algorithm is compiled into a C program and executed on the YXSPACE controller equipped with a TMS320F28335 microcontroller; the controller regulates the output voltage of the converters in real time through its input and output ports to achieve the desired values. The nominal circuit parameters of the three converters are listed in Table \ref{para}.

\begin{figure}
	\centering
	\includegraphics[scale=1]{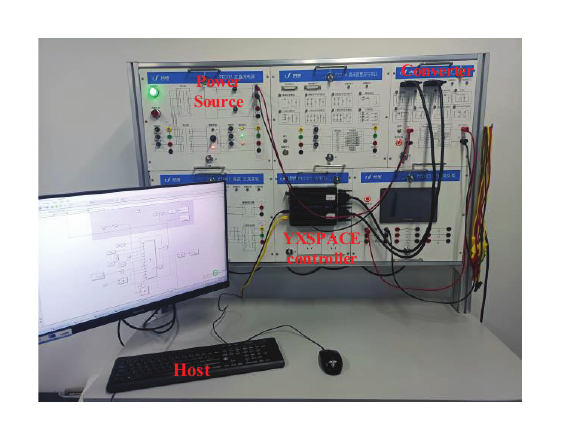}
	\caption{The setup.}
	\label{setup}
\end{figure}

\subsection{Buck converter} 

The reference voltage $v_\star$ is initially set to $20$ V, then decreased to $15$ V, and further reduced to $10$ V. The control gains are chosen as $k=0.01$ and $\gamma=10$. As shown in Fig. \ref{expfig1} (a), the output voltage temporarily reaches the saturation limit of $22$ V, but quickly converges to the initial reference voltage of $20$ V and subsequently follows the reference voltage as it decreases to $15$ V and $10$ V, demonstrating good stability of the proposed control method. The robustness of the proposed method under load variations is further validated. The reference voltage is fixed at $15$ V, while $R$ changes from $60~\Omega$ to $30~\Omega$ and $P_{\tt cpl}$ varies from $1.2$ W to $1.8$ W. The experimental results, shown in Fig. \ref{expfig2} (a), indicate that the output voltage is barely affected by the changes in $R$ and $P_{\tt cpl}$ and accurately converge to the reference value, demonstrating the strong robustness of the proposed control method against load variations. Figures \ref{expfig1} (b) and \ref{expfig2} (b) respectively show the control inputs under these two conditions.

\begin{figure}[!ht]
  \centering
  \begin{minipage}[b]{0.35\linewidth}
    \centering
    \includegraphics[width=\linewidth]{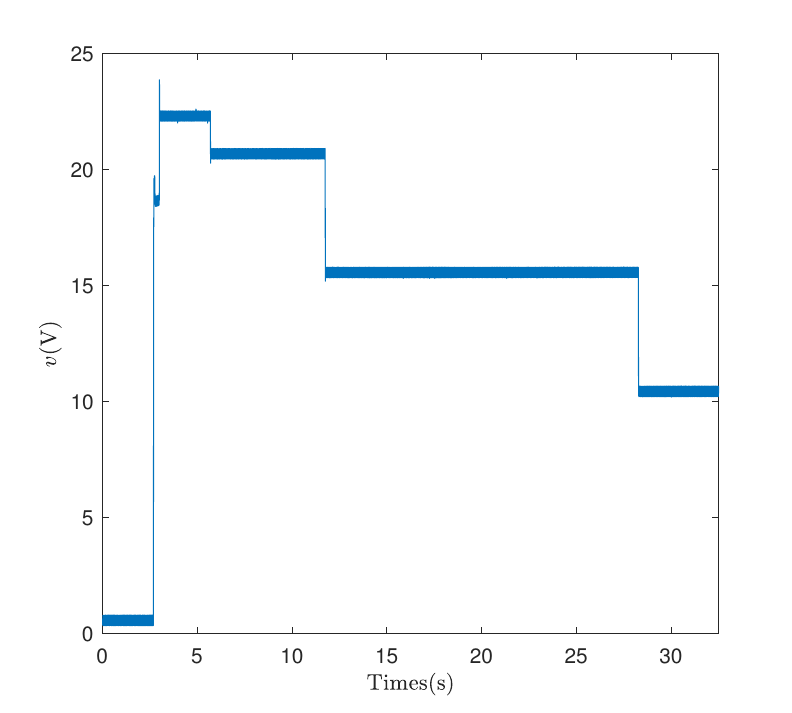}
    \\
    (a) output voltage
  \end{minipage}
  \hfil
  \begin{minipage}[b]{0.35\linewidth}
    \centering
    \includegraphics[width=\linewidth]{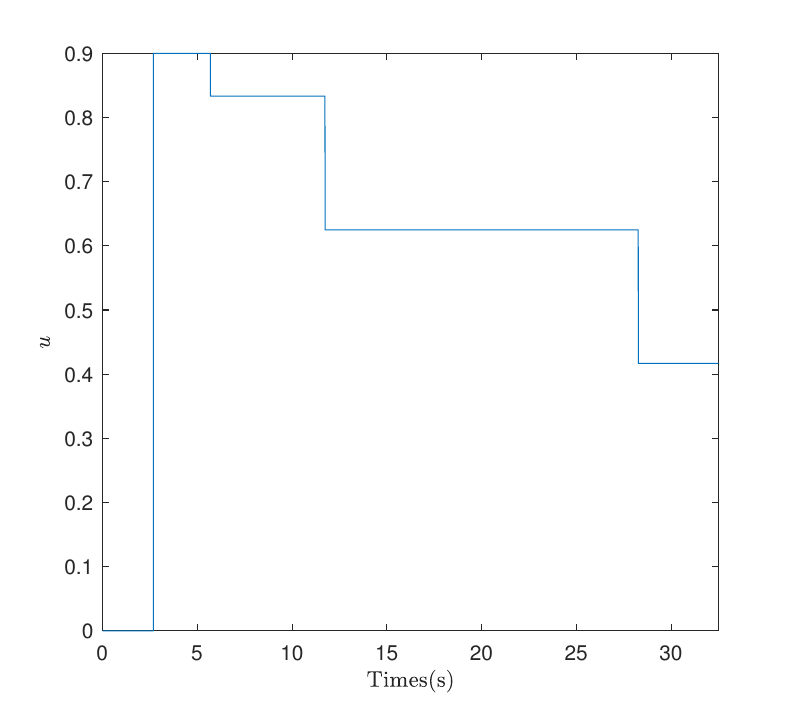}
    \\
    (b) duty ratio
  \end{minipage}
  \caption{Response curves of Buck converter under step change in $v_\star$}
  \label{expfig1}
\end{figure}

\begin{figure}[!ht]
  \centering
  \begin{minipage}[b]{0.35\linewidth}
    \centering
    \includegraphics[width=\linewidth]{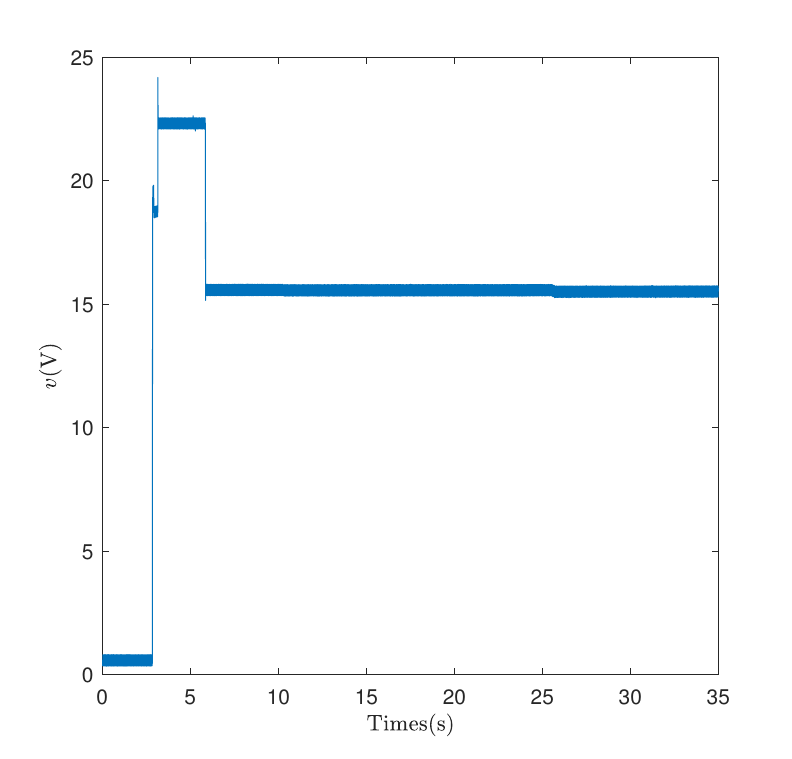}
    \\
    (a) output voltage
  \end{minipage}
  \hfil
  \begin{minipage}[b]{0.35\linewidth}
    \centering
    \includegraphics[width=\linewidth]{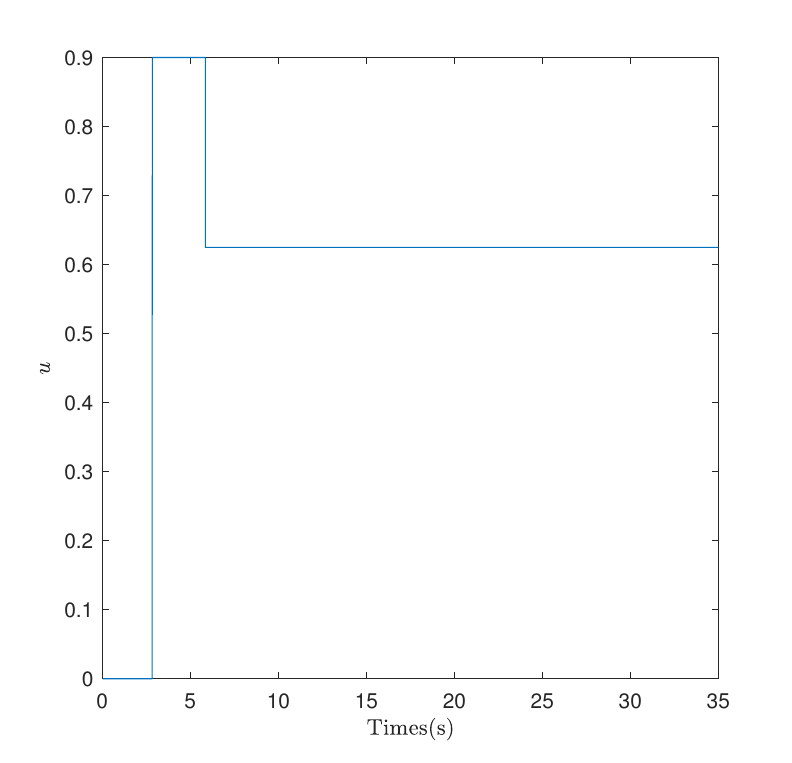}
    \\
    (b) duty ratio 
  \end{minipage}
  \caption{Response curves of Buck converter under step change in $R$ and $P_{\tt cpl}$}
  \label{expfig2}
\end{figure}

\subsection{Boost converter} 
The control gains were set to  $k=3$ and $\gamma=10$. The reference voltage $v_\star$
 is initially set to $26$ V, and then sequentially stepped to $30$ V and $40$ V. As shown in Fig. \ref{expfig3} (a), the output voltage exhibits brief deviations in response to these steps but quickly stabilizes near the reference values, with only minor steady-state errors due to unmodeled dynamics, demonstrating the good stability of the proposed control method. The robustness of the proposed method under load variations is further evaluated. The reference voltage is fixed at $30$ V, while $R$ changes from $60~\Omega$ to $30~\Omega$ and $P_{\tt cpl}$ varies from $1.2$ W to $1.8$ W. As shown in Fig. \ref{expfig4} (a), the output voltage exhibits only minor fluctuations in response to the load changes and eventually stabilizes near the reference value, with only small steady-state errors which may be caused by parasitic resistances. These results indicate that the proposed control method maintains good robustness under load variations.  Figures \ref{expfig3} (b) and \ref{expfig4} (b) respectively show the control inputs under these two conditions.
 
\begin{figure}[!ht]
  \centering
  \begin{minipage}[b]{0.35\linewidth}
    \centering
    \includegraphics[width=\linewidth]{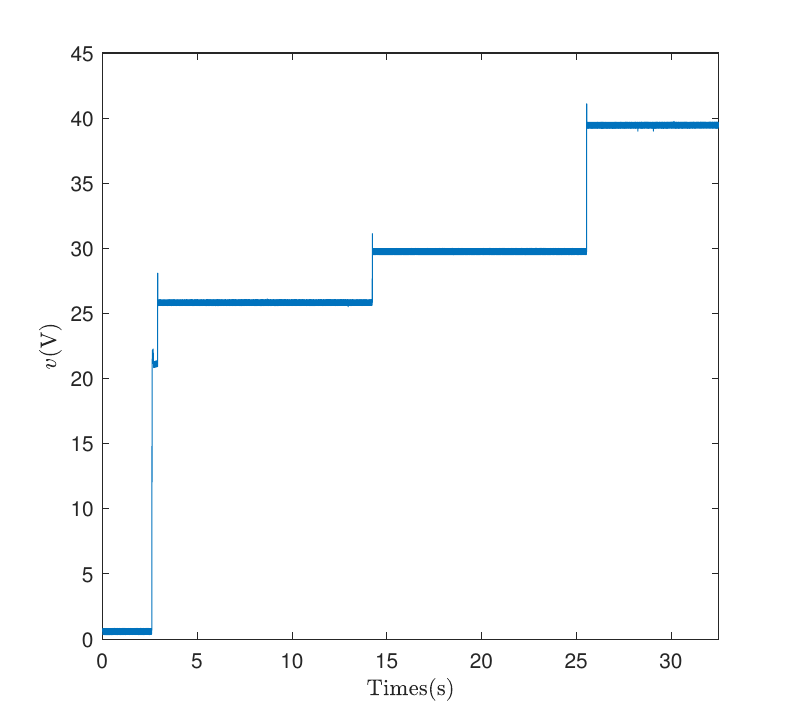}
    \\
    (a) output voltage
  \end{minipage}
  \hfil
  \begin{minipage}[b]{0.35\linewidth}
    \centering
    \includegraphics[width=\linewidth]{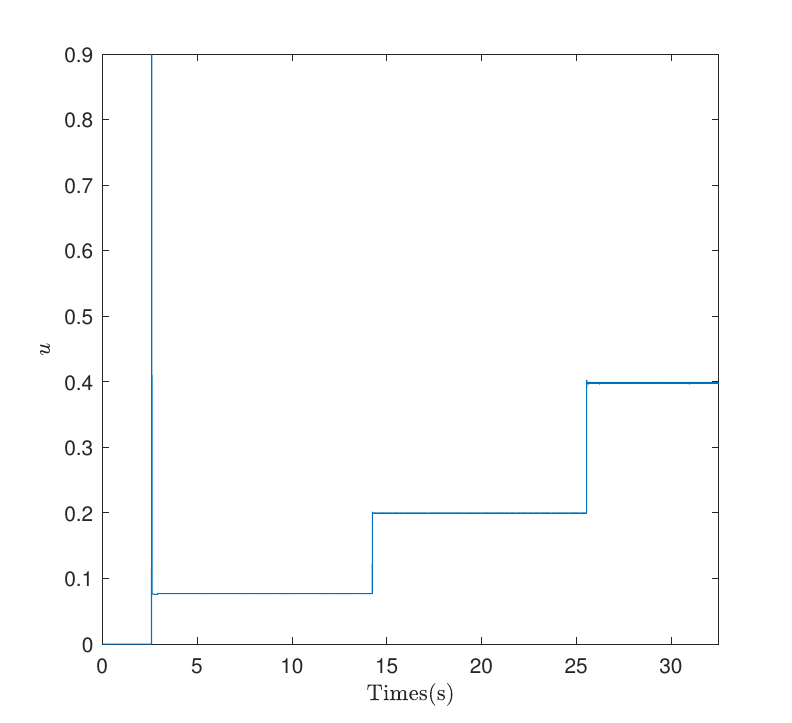}
    \\
    (b) duty ratio
  \end{minipage}
  \caption{Response curves of Boost converter under step change in $v_\star$}
  \label{expfig3}
\end{figure}

\begin{figure}[!ht]
  \centering
  \begin{minipage}[b]{0.35\linewidth}
    \centering
    \includegraphics[width=\linewidth]{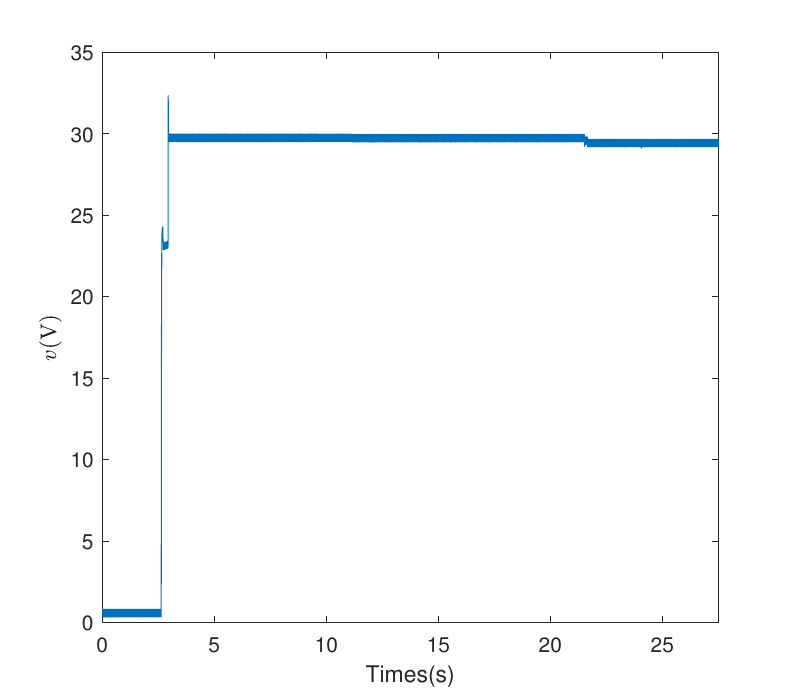}
    \\
    (a) output voltage
  \end{minipage}
  \hfil
  \begin{minipage}[b]{0.35\linewidth}
    \centering
    \includegraphics[width=\linewidth]{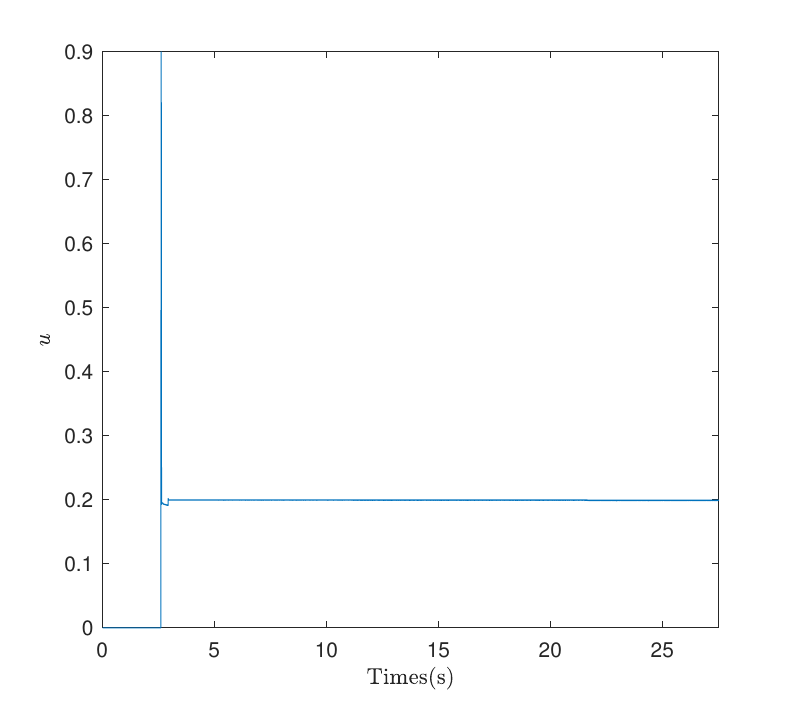}
    \\
    (b) duty ratio
  \end{minipage}
  \caption{Response curves of Boost converter under step change in $R$ and $P_{\tt cpl}$}
  \label{expfig4}
\end{figure}

\subsection{Buck-Boost converter}
The control gains were set to 
$k=2$ and $\gamma=10$. The reference voltage $v_\star$ is initially set to $20$ V, and then sequentially stepped to $24$ V and $30$ V. Fig. \ref{expfig5} (a) shows that the output voltage responded rapidly to these changes and stabilizes near the desired values, with minimal steady-state error primarily caused by unmodeled system dynamics. Subsequently, with the reference voltage fixed at $30$ V, $R$ is decreased from $60~\Omega$ to $30~\Omega$ and $P_{\tt cpl}$ is increased from $1.2$ W to $1.8$ W. Fig. \ref{expfig6}(a) shows that the output voltage converges close to the reference value, responding smoothly to the load variations and exhibiting only minor steady-state deviations, which may be caused by parasitic resistances. Figures \ref{expfig5} (b) and \ref{expfig6} (b) respectively show the control inputs under these two conditions. Overall, the proposed control method demonstrates excellent stability and robustness performance.}

\begin{figure}[!ht]
  \centering
  \begin{minipage}[b]{0.35\linewidth}
    \centering
    \includegraphics[width=\linewidth]{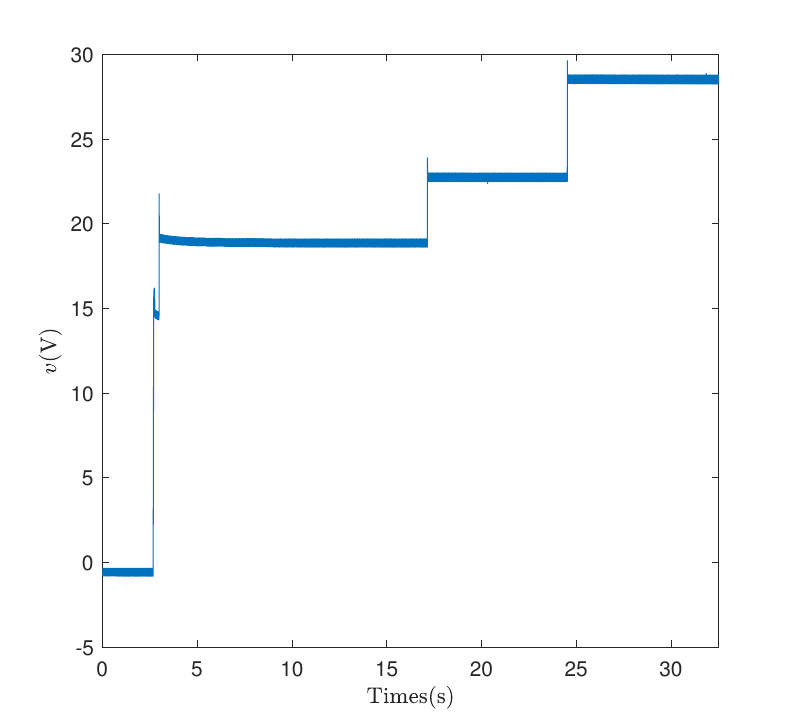}
    \\
    (a) output voltage
  \end{minipage}
  \hfil
  \begin{minipage}[b]{0.35\linewidth}
    \centering
    \includegraphics[width=\linewidth]{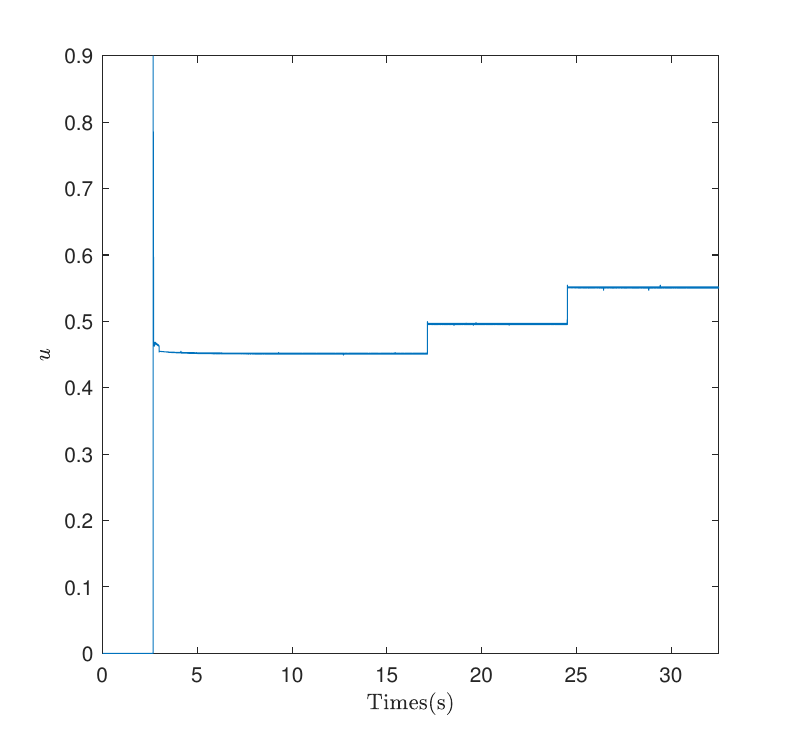}
    \\
    (b) duty ratio
  \end{minipage}
  \caption{Response curves of Buck-Boost converter under step change in $v_\star$}
  \label{expfig5}
\end{figure}

\begin{figure}[!ht]
  \centering
  \begin{minipage}[b]{0.35\linewidth}
    \centering
    \includegraphics[width=\linewidth]{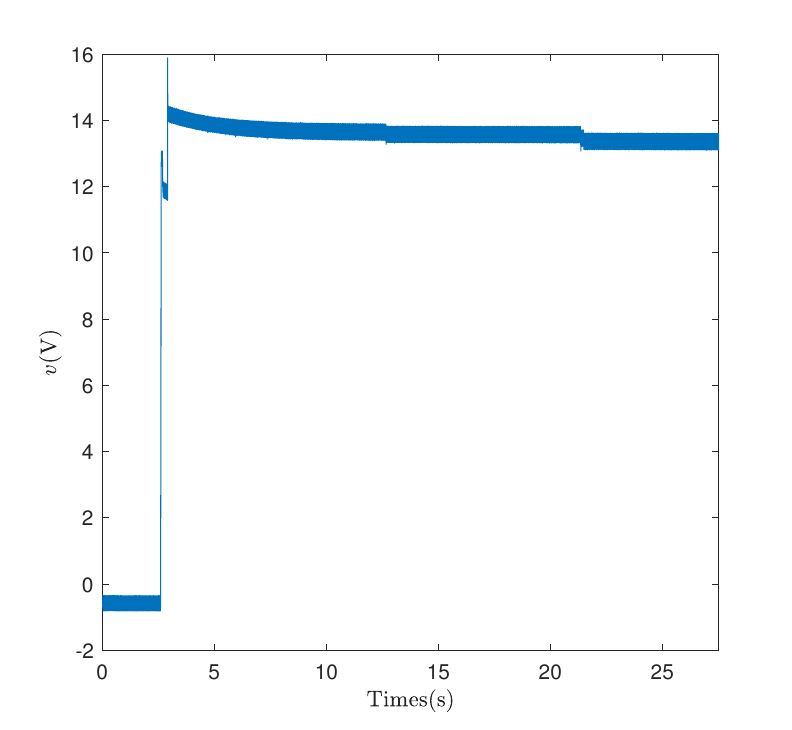}
    \\
    (a) output voltage
  \end{minipage}
  \hfil
  \begin{minipage}[b]{0.35\linewidth}
    \centering
    \includegraphics[width=\linewidth]{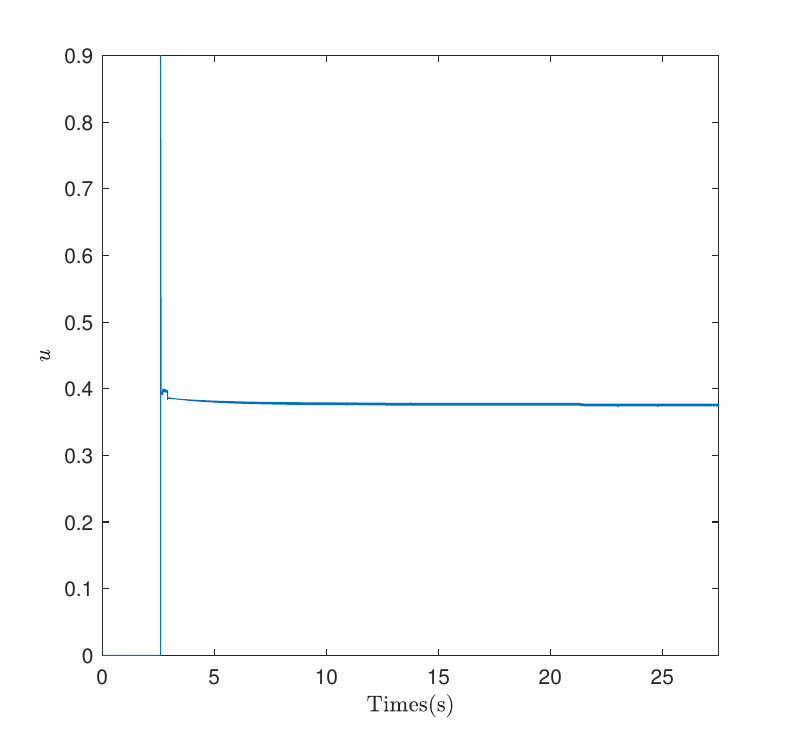}
    \\
    (b) duty ratio
  \end{minipage}
  \caption{Response curves of Buck-Boost converter under step change in $R$ and $P_{\tt cpl}$}
  \label{expfig6}
\end{figure}

%
\section{Concluding Remarks and Future Work}
\lab{sec8}
%
We have proposed a new procedure to design partial state feedback IDA-PBC for second order systems, which are possibly nonlinear in the control signal. One important feature of the procedure is that there is no need to solve the matching PDE, that often stymies the application of IDA-PBC. Instead, it is only required to solve an ODE, that in some cases, is trivially solved. 

The procedure is applied for the design of voltage regulation controllers for three widely popular DC-to-DC power converters, The main feature of these schemes is that   we assume only measurement of the output voltage that---due to its non-minimum phase characteristic---is well-known to be a difficult task. Furthermore, the controllers are applicable under a very general scenario for the load, namely, it is only assumed to be known a static relationship between its current and voltage. A very mild assumption on the behavior of this function at the equilibrium point is imposed.

For the, often encountered, case of a load consisting of the parallel connection of linear resistor and and a constant power load, we moreover propose a very simple parameter estimation scheme, that enables the application of an adaptive controller. In this particular case of a load, we also derive a Lyapunov function for the stabilized equilibrium, that allows us to estimate their domain of attraction. 

Finally, we provide simulation and experimental evidence of the proposed controllers---leaving for future research the experimental validation of the adaptive scheme and the comparison with other controllers reported in the literature---{\em e.g.}, the voltage feedback PI schemes reported in \cite{FANORTGRI}.

\bibliography{NewIDAPBC_v39_ExtendVersion}
\bibliographystyle{IEEEtranTIE}

\appendices
\section{Poincare's Lemma}
\lab{appa}

\begin{lemma} \em \cite[Theorem 10.39]{RUDbook}
	The map $g:\rea^n \to \rea^n$ satisfies
	$$
	g(x)=\nabla v(x),
	$$
	for some $v: \rea^n \to \rea$, if and only if
	$$
	\nabla g(x)=[\nabla g(x)]^\top.
	$$
\end{lemma}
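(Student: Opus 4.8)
The plan is to establish the two implications of the equivalence separately; on the convex domain $\rea^n$ both are elementary, so I expect no serious obstacle.

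For necessity, I would simply invoke equality of mixed second partials. Suppose $g(x) = \nabla v(x)$ for some $v:\rea^n\to\rea$. Since all mappings are assumed smooth, $v \in C^2$, and Schwarz's theorem gives, with $g_i = \partial v/\partial x_i$, the identities $\partial g_i/\partial x_j = \partial^2 v/(\partial x_j\,\partial x_i) = \partial^2 v/(\partial x_i\,\partial x_j) = \partial g_j/\partial x_i$ for every pair $i,j$, which is exactly $\nabla g(x) = [\nabla g(x)]^\top$. This step is immediate.

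For sufficiency I would construct an explicit primitive by integrating $g$ along the radial segment from the origin. Assuming $\nabla g(x) = [\nabla g(x)]^\top$, i.e.\ $\partial g_i/\partial x_j = \partial g_j/\partial x_i$ for all $i,j$, set
\[
v(x) := \int_0^1 x^\top g(sx)\, ds = \int_0^1 \sum_{i=1}^n x_i\, g_i(sx)\, ds,
\]
which is well defined on all of $\rea^n$ because the segment $\{sx : s\in[0,1]\}$ lies in $\rea^n$ and $g$ is continuous. Differentiating under the integral sign --- legitimate since $g\in C^1$ and the integrand is smooth in $(s,x)$ with $s$ over the compact set $[0,1]$ --- yields, for each $k$,
\[
\frac{\partial v}{\partial x_k}(x) = \int_0^1 \Big( g_k(sx) + s\sum_{i=1}^n x_i\, \frac{\partial g_i}{\partial x_k}(sx) \Big)\, ds.
\]
Next I would use the symmetry hypothesis $\partial g_i/\partial x_k = \partial g_k/\partial x_i$ inside the sum and recognize the integrand as a total $s$-derivative,
\[
g_k(sx) + s\sum_{i=1}^n x_i\, \frac{\partial g_k}{\partial x_i}(sx) = \frac{d}{ds}\big[\, s\, g_k(sx)\,\big],
\]
so that $\partial v/\partial x_k(x) = \big[\, s\, g_k(sx)\,\big]_{s=0}^{s=1} = g_k(x)$, i.e.\ $g = \nabla v$.

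The only points that need care are the justification of differentiation under the integral sign and the chain-rule bookkeeping collapsing the integrand into $\tfrac{d}{ds}[\, s\, g_k(sx)\,]$; both are routine under the standing smoothness hypothesis, and the construction hinges only on $\rea^n$ being star-shaped about the origin, which is precisely what the radial-integral argument needs. A more conceptual alternative would be to write $g$ as the $1$-form $\omega = \sum_i g_i\, dx_i$, observe that the symmetry condition is equivalent to $d\omega = 0$, and then quote the Poincar\'e lemma for closed forms on the contractible manifold $\rea^n$; I would prefer the direct construction above because it is self-contained and yields an explicit $v$.
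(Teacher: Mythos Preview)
Your argument is correct: the necessity direction is exactly Schwarz's theorem, and for sufficiency your radial-integral construction of the primitive is the standard proof on a star-shaped (hence simply connected) domain such as $\rea^n$; the differentiation under the integral sign and the recognition of the integrand as $\tfrac{d}{ds}[s\,g_k(sx)]$ are both fully justified under the smoothness hypothesis.

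There is, however, nothing to compare against: the paper does not prove this lemma at all. It merely states it in the appendix and cites \cite[Theorem 10.39]{RUDbook} as the source, treating it as a standard result to be quoted. So your self-contained proof goes well beyond what the paper provides, which is simply a reference.
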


\section{Proof of Proposition \ref{pro2}}
\lab{appb}
First, we make the observation that ${\alpha}_1$ and ${\beta}$ are {\em constants}, independent of $x$ and $u$. Therefore, $\hat {\alpha}_1={\alpha}_1$ and $\hat {\beta}={\beta}$. Clearly, these constants satisfy  $\textbf{C1}$ and   $\textbf{C2}$.\footnote{It is important to note that this is the {\em simplest} choice one can make for these functions, satisfying conditions $\textbf{C1}$ and   $\textbf{C2}$.} On the other hand, $D_1(x,u)$ and $D_2(x,u)$ are given by
	\begalis{
		D_1(x,u) &= \frac{1}{k} (x_2 - u)+ x_1-h(x_2) \\
		D_2(x,u) &=  x_2-u,
	}
	Computing the partial derivatives
	\begalis{
		\nabla_{x_2}D_1(x,\hat u(x_2)) &= \frac{1}{k}-h'(x_2) -\frac{1}{k} \hat u'(x_2) \\
		\nabla_{x_1}D_2(x,\hat u(x_2)) &= 0.
	}
	The equation \eqref{poilem} of \textbf{C3} corresponds, in this case, to the ODE
	\begin{align}\label{odebck}
		\hat u'(x_2) =1 -k h'(x_2), 
	\end{align}
	whose solution is trivially given by
	$$
	\hat u(x_2) =x_2 - k h(x_2)+ \kappa, 
	$$
	where $\kappa$ is a free constant. To satisfy \textbf{C4} we impose the constraint $u_\star=x_{2\star}$, yielding
	$$
	\kappa=k h_\star,
	$$
that, replaced in the equation above, corresponds to the control given in the proposition. 
	
	Replacing the control $\hat u(x_2)$ in the vector $D(x,u)$ we get
	\begequ
	\lab{e1}
	\hat D(x)=\begin{bmatrix}
		x_1-h_\star \\ \\ k[h(x_2)-h_\star]
	\end{bmatrix}
	\endequ
	Its Hessian is
	$$
	\nabla \hat D = \begin{bmatrix} 1& 0 \\ 0& kh'(x_2)\end{bmatrix}.
	$$
	Evaluating at $x_{2\star}$ and imposing {\bf Assumption} \ref{ass1} we verify the condition \eqref{codp} in {\bf C5}.
	
	Finally, some simple calculations show that the elements of the set \eqref{asysta} must satisfy $h(x_2)=h_\star$, whose only solution is $x_2=x_{2\star}$---verifying condition {\bf C6}.
	
	We proceed now to obtain the Lyapunov function  $P(x)$ defined in \eqref{Potbck} when the load is of the form \eqref{ym}. This is obtained noting that {\bf C3} ensures the existence of a scalar function $P(x)$ such that $\hat D(x)=\nabla P(x)$. Hence, we integrate $\hat D(x)$ given in \eqref{e1}.
	Integrating the first element by $x_1$ yields 
	\begin{align*}
		P(x)= \frac{x_1^2}{2}-h_\star x_1 +\mu(x_2) 
	\end{align*}
	where the function $\mu (x_2)$ is an ``integration constant". This implies that 
	$$
	\frac{\partial P}{\partial x_2}= \mu '(x_2) \equiv \hat D_2(x_2)=k[h(x_2)-h_\star].
	$$ 
	Therefore, $\mu(x_2)$ is obtained from the integration of the right side element of the last equality. An adequate selection of the integration constant,  such that $P(x_\star)=0$, yields the expression in \eqref{Potbck}. 

\section{Proof of Proposition \ref{pro3}}
\lab{appc}
	The polynomial of $\textbf{C1}$ becomes 
	\begin{align*}
		\hat\alpha_1(x)\hat\alpha_2(x)+\hat\beta^2(x)=&\left(g(x_2)-\frac{k}{\hat u(x_2)}\right)^2 =\frac{c^2}{h^2(x_2)}  \neq 0,
	\end{align*}
	where \eqref{ubst} was substituted to obtain the last expression. Therefore, \textbf{C1} is met. Furthermore, for the physically constrained variables, it is clear that \textbf{C2} is fulfilled as well. 
	
	Now, the selection \eqref{qbb} yields
	\begin{subequations}\label{Dbb}
		\begin{align}
			D_1(x,u)=& (k-1)x_1 - k \frac {h(x_2)}{u} + h(x_2)g(x_2) \label{D1bb}\\
			D_2(x,u)=& \left[g(x_2)-\frac{k}{u}\right]\left[-g(x_2)u(x_2) +1\right]\label{D2bb}
		\end{align}
	\end{subequations}
	
	From \eqref{D2bb} we see that $D_2(x,u)$ is independent of $x_1$, consequently the equation \eqref{poilem} of \textbf{C3} corresponds to $\nabla_2 D_1(x,u)=0$, which yields the  ODE 
	\begin{align}
		\frac{1}{k}  [g(x_2)h'(x_2)+h(x_2)]\hat u^2(x_2) - h'(x_2) \hat u(x_2) + h(x_2) \hat u'(x_2)&=0 \label{odebst}
	\end{align}
	where, to get \eqref{odebst} from \eqref{D2bb}, we employed the fact that $g'(x_2)=1$ for both converters and multiplied throughout by $\hat u^2(x_2)$. Some simple calculations show that the control law \eqref{ubst} solves the ODE \eqref{odebst}. Moreover, it is possible to verify that replacing the control in the system dynamics  fulfills \textbf{C4}. 
	
	To verify condition \textbf{C5} we compute the Hessian of the vector $\hat D(x):=D(x,\hat u(x_2)) $ to obtain
	\begin{align}\label{nDbst}
		\nabla \hat D(x) = \begin{bmatrix}k-1 & 0 \\ 0 & [ \hat D_2(x_2)]'  \end{bmatrix}.
	\end{align}
		with the $(2,2)$-element given by
\begin{align}\label{D2hes}
[ \hat D_2(x_2)]' =& -\left[g(x_2)-\frac{k}{\hat u(x_2)}\right][g(x_2) \hat u(x_2)]' \nonumber + \left[g'(x_2)+\frac{k}{\hat u^2(x_2)}\right]
\left[-g(x_2) \hat u(x_2)+1\right] \nonumber\\
=& -\left[g(x_2)-\frac{k}{\hat u(x_2)}\right][g(x_2)\hat u'(x_2) + \hat u(x_2)] 
+ \left[1+\frac{k}{\hat u^2(x_2)}\right]\left[-g(x_2) \hat u(x_2)+1\right].
\end{align}
		Positive definiteness of the Hessian at the equilibrium point---and, therefore, accomplishment of condition \textbf{C5}---is subsequently analyzed. Before proceeding, we notice  that at the equilibrium $\hat f_{1\star}=0$---see \eqref{bst}. In other words,  
		$$-g_\star \hat  u_\star+1=0.$$
		This fact reduces \eqref{D2hes} evaluated at the equilibrium to
		\begin{align*}
			\left([ \hat D_2(x_2)]'\right)_\star =&- \left[g_{\star}-\frac{k}{\hat u_\star}\right]\left[g_{\star}\hat u'_\star+ \hat u_\star\right]\\
			=&  -g_{\star}^2 u' -g_{\star}u_\star  + kg_{\star}\frac{u'_\star}{u_\star} + k \\
			=&  -\frac{u'_\star }{u_\star^2} -1 + k\frac{u'_\star}{u_\star^2} + k\\
			=&\frac{u_\star'}{u_\star^2}\left[k-1\right]+k-1
		\end{align*}
		where, to obtain the third line we use the equilibrium relation in \eqref{eqbs}
		\begin{align}\label{gstar}
			g_\star = \frac{1}{\hat u_\star}.
		\end{align}

		For the matrix $(\nabla \hat D )_\star$ to satisfy positive definiteness,  its $(1,1)$-element must be necessarily positive, that is, 
		\begin{align}
			k-1>0,	\label{condk}
		\end{align}
which is ensured in \eqref{ubstcon} holds.

		It is also required that $\mathrm{det}\left\{(\nabla \hat D)_\star\right\}>0$. Bearing that in mind, we observe, as an intermediary step, that manipulating \eqref{odebst} with  $x=x_\star$  yields
		\begin{align*}
			u'_\star&= \frac{u_\star}{h_\star}\left[ h'_\star - \frac{u_\star}{k}\left(g_\star h'_\star+h_\star \right)\right]\\
			&= \frac{u_\star}{h_\star}\left[ h'_\star -\frac{1}{k}\left( h'_\star+  h_\star u_\star\right)\right]
		\end{align*}
		where \eqref{gstar}  was newly utilized to get the last line. Hence, the determinant of \eqref{nDbst} at the equilibrium is {
		\begin{align*}
			\mathrm{det}\left\{(\nabla \hat D)_\star\right\}= & (k-1)^2\frac{u_\star'}{u_\star^2}+(k-1)^2\\\nonumber
			=& \frac{1}{h_\star u_\star}(k-1)^2\left[ h'_\star -\frac{1}{k}\left( h'_\star+  h_\star u_\star\right)\right]+(k-1)^2.
		\end{align*}}
It is worth mentioning that, in the last exression, $h'_\star>0$ (by assumption),  $u_\star>0$,  and $h_\star=x_{1\star}u_\star>0$  since $x_{1\star}$ is also positive. Thus, regarding the first term of the last determinant expression, it becomes non-negative whenever
		\begin{align*}
h'_\star \geq \frac{1}{k}\left( h'_\star+  h_\star u_\star\right)=	\frac{1}{k}\left( h'_\star+ {h_\star \over g_\star}\right),
		\end{align*} 
where we used \eqref{gstar} to obtain the right hand side identity. This is ensured if $k$ satisfies \eqref{ubstcon} verifying \textbf{C5}.
	
	The final part of this proof consists in obtaining $P(x)$. For that, we first replace $\hat u(x_2)$ into $D_1(x,u)$---see \eqref{D1bb}---as indicated below
	\begin{align*}
		D_1(x,u)=& (k-1)x_1 -k \frac{h(x_2)}{\hat u(x_2)} +h(x_2)g(x_2)  \\
		=& (k-1)x_1 - c \\
		=& \nabla_1 P(x),
	\end{align*}
	Integration of the later  equation with respect $x_1$ gives the next expression of $P(x)$ 
	$$P(x) = \frac{1}{2}(k-1)x_1^2 - cx_1 + \mu(x_2), $$ where the function $\mu(x_2)$ is the ``integration constant'' to be defined. The derivative of $P(x)$  with respect $x_2$  is then equated to $D_2(x,u)$---see \eqref{D2bb}---yielding
\begin{align}
\mu'(x_2) 
&= k g(x_2) - c \frac{1}{h(x_2)} - k \frac{g^2(x_2) h(x_2)}{g(x_2) h(x_2) + c} \nonumber\\
&= \nabla_2 P(x). \nonumber
\end{align}
	Now, $h(x_2)$ is fixed as in \eqref{hx2}. From straightforward manipulations of $\mu'(x_2)$, it follows that 
	\begin{align*}
		\mu '(x_2)   
		&=  kg(x_2) - \frac{c}{2\RR } \frac{[x_2h(x_2)]'}{x_2h(x_2)} - k\frac{g^2(x_2) h(x_2)}{g(x_2)h(x_2)+c} 
	\end{align*}

	The function $\mu(x_2)$ is hence obtained after integrating with respect $x_2$  as follows
	\begin{align*}
		\mu(x_2) = & k\int g(s)ds  - \frac{c}{2\RR} \ln (x_2h(x_2)) \nonumber
-k \int \frac{g^2(s) h(s)}{g(s)h(s)+c }\;\; ds +\text{const.}, 
	\end{align*}
	where the last term  refers to  a free constant. The Lyapunov functions for the Boost and Buck-Boost converters stated in the Lemma are obtained by replacing $g(x_2)$ for each one of the converters, evaluating the resulting integral and an appropriate selection of the constant term.

\end{document}